\newcommand{\lR}{\mathrm{I\hspace{-0.7mm}R}}
\newtheorem{theorem}{Theorem}
\newtheorem{lemma}{Lemma}
\theoremstyle{remark}
\numberwithin{equation}{section}
\begin{document}




\title[Static black holes]{Static black holes in higher dimensional Einstein-Skyrme models}

\author[B. E. Gunara, et al.]{Bobby E. Gunara\footnote{Corresponding author}, Fiki T. Akbar,  Rizqi Fadli,\\ Deden M. Akbar, and Hadi Susanto}




\begin{abstract}

In this paper we construct a class of  hairy static black holes of higher dimensional Einstein-Skyrme theories with the cosmological constant $\Lambda \le 0$ whose scalar is an $SU(2)$ valued field.   The spacetime is set to be conformal to $ \mathcal{M}^4 \times \mathcal{N}^{N-4}$ where $\mathcal{M}^4$ and $\mathcal{N}^{N-4}$ are  a four dimensional spacetime  and a compact  Einstein $(N-4)$-dimensional submanifold for $N \ge 5$, respectively, whereas  $N=4$ is the trivial case. We discuss the behavior of solutions near the boundaries, namely, near the (event) horizon and in the asymptotic region. Then, we establish  local-global existence of black hole solutions and  show that black holes with finite energy exist if their geometries are asymptotically Ricci flat. At the end, we perform a linear stability analysis using perturbative method and give a remark about their stability.

\end{abstract}

\maketitle
\tableofcontents




\section{Introduction}
\label{sec:intro}


Over two decades studies of the Einstein-Skyrme model that possesses a black hole solution show that the solution 
could be a counterexample to the no-hair conjecture for black holes. Originally, this conjecture comes from stationary spacetime \cite{RuffiniPT1971}, but it could be simplified to the case of static spacetime which states that a static black hole is only characterized by its mass and its electric and magnetic charges. So far, most of the Einstein-Skyrme theories being considered are in four dimensions. In the case of asymptotically flat spacetimes we could also see several models in,  for example,  \cite{Volkov:2016ehx} for an excellent review.  Few examples of  asymptotically anti-de Sitter spacetimes are considered in \cite{ShiikiPRD2005, Shiiki:2005xn, Perapechka:2016cof}, while we have only one example for  de Sitter background \cite{Brihaye:2005an}. In the case of higher  dimensions, only few examples have been considered. In five dimensions, some authors have considered a Skyrme model which can be viewed as an $O(5)$ sigma model coupled to gravity \cite{Brihaye:2017wqa}. The model assures the existence of a topological charge which can be defined globally on the spacetime. Although the results there were found for an $O(5) $ sigma model in five dimensions, the existence of some universality for the basic properties of solutions can also be applied to any dimension higher than five. However, as the dimension increases, there are more terms in the theory  which makes it 
more difficult to consider. In  seven dimensions, we have so called Skyrme  branes \cite{BlancoPillado:2008cp}.  In the model, the skyrmion field lives on a warped codimensional three submanifold of  a seven-dimensional spacetime endowed with warped metric\footnote{Although the idea in this paper looks similar to that of \cite{BlancoPillado:2008cp}, we will see later that the physics here is completely different. In \cite{BlancoPillado:2008cp}, the authors  introduce seven-dimensional spacetime which is conformal to $\lR^{1,3} \times \mathcal{S}^3$ where $\lR^{1,3}$ is four dimensional Minkowski spacetime and $\mathcal{S}^3$ is three dimensional manifold  conformal to $\lR^+ \times S^2 $  with $\lR^+ $ and $S^2$ being positive real number and  two-sphere, respectively. Then, they examine the physical effect of the additional three dimensions (which is $\lR^3$). In our case,  the extra dimensional submanifold is a compact Riemannian of dimension $N-4$ with $N \ge 4$.}. As a result, the model exhibits a brane world scenario.

In this paper we are trying to establish some results of higher dimensional ($N \ge 4$) Einstein-Skyrme models with the cosmological constant $\Lambda$ turned on. We particularly consider a special class of static solutions where the spacetime is static and chosen to be conformal to $ \mathcal{M}^4 \times \mathcal{N}^{N-4}$ where $\mathcal{M}^4$ and $\mathcal{N}^{N-4}$ are  the four dimensional spacetime  and the compact  $(N-4)$-dimensional submanifold, respectively, with the metric functions $\delta(r), ~ m(r)$, and $C(r)$. In particular, the function $C(r)$ is a warped factor of the compact submanifold $\mathcal{N}^{N-4}$ which has to be Einstein with the cosmological constant $\Lambda_{N-4} $ in order to be compatible with the Einstein field equation for $N \ge 6$, while in the case of $N=4,5$ we have trivial cases, that is, $\Lambda_{N-4} \equiv 0$. Here, we only consider the $\Lambda_{N-4} \ge 0$ case. The Skyrme field is an $SU(2)$ valued field\footnote{In four dimensions this is commonly referred to as a chiral field, which means that it is a specific  nonlinear sigma model where the field values are in a Lie group that can be viewed as an effective QCD-like theory. The chirality corresponds to the fermion representation in four dimensional spacetime. We could particularly extend this property to any even dimension, see for example, \cite{DateLMP1987}.} defined locally on the hypersurface  $\mathcal{S}^3 \subseteq \mathcal{M}^4 $, where $\mathcal{S}^3$ is conformal to $\lR^+ \times S^2 $  with $\lR^+ $ and $S^2$ being positive real number and  two-sphere, respectively, which can be expressed in terms of a profile function $f$ with $f \equiv f(r)$ where $r$ is  the radial coordinate \cite{LuckPLB1986}. 

The setup has some consequences as follows. First, we could not have a global covariant topological current such as baryon number for $N \ge 5$. However, the number could be locally defined  on  $\mathcal{S}^3$.  Second, the dynamic of the theory is indeed more complicated where as we will see, we need to specify the behavior of $\delta(r), ~ m(r)$, $C(r)$, and $f(r)$ on the boundaries,  namely, at the (event)  horizon and the outer boundary,   in order to have a physical black hole with 
  linear stability. Finally, another essential issue is the existence of branches of solutions which can be viewed in this case as two parameter families of black holes. These parameters are the shooting parameters, namely,  the value of $C(r)$ and $f(r)$ on the horizon. Since our model has a complicated structure, one can no longer use these two parameters for the definition of branches. Instead, we use the existence of two-surfaces near the horizon to classify branches.
  

The first consequence follows from the fact that the scalar field $U$ is only an $SU(2)$ valued for  $N \ge 5$. To have a global covariant topological current, the scalar field $U$ has to be  an $O(N)$ valued for  $N \ge 5$ which makes the theory even more complicated because more terms arise as the dimension increases \cite{Brihaye:2017wqa}. To see the second consequence, we have to employ the analysis on the (event) horizon and the outer boundary, namely, the asymptotic region for $\Lambda \le 0$ and near the cosmological horizon for $\Lambda > 0$. Near the (event) horizon, in order to have a consistent picture, all metric functions and the profile function $f$ should be positive constants and they are related to the cosmological constants $\Lambda, ~ \Lambda_{N-4} $  via an inequality coming from  the consistency condition of Ricci scalar such that near the horizon the 4-spacetime becomes ${\mathcal T}^2 \times S^2 $   where the 2-surface ${\mathcal T}^2$ could be either a flat Minkowski surface $\lR^2$ or  an anti-de Sitter surface $AdS_2$ \cite{Kunduri:2007}. A  \textit{primary}  branch  contains black holes with two possible topologies of  ${\mathcal T}^2$, either  $\lR^2$ or   $AdS_2$.  Another branch, that we call \textit{secondary}, 
contains black holes with only one possible topology of ${\mathcal T}^2$. In this setup, one can immediately see that this classification  is simply  determined by the value of the cosmological constant $\Lambda$.  On the outer boundary, for $\Lambda \le  0$ case if all the metric functions $\delta(r), ~ m(r)$, and $C(r)$ converge to constants, then the black hole spacetime becomes Einstein geometry. We also find that using the Komar integral \cite{Kastor:2008cqg, Kastor:2009cqg, Gunara:2010iu} the black hole mass is generally constant.

In the case of $\Lambda \le 0$, we also establish local-global existence and uniqueness of this skyrmionic black hole solution. First, we show local existence and uniqueness using Picard's iteration and the contraction mapping properties. Then, using the maximal solution technique we prove global existence. By adding particular decay properties of the functions $\delta(r), ~ m(r)$, $f(r)$, and $C(r)$ in the asymptotic region, it can be shown that the global solution is finite.  Moreover, if we demand that this solution has finite energy, then the decay property of  $f(r)$ has a particular form and the cosmological constant $\Lambda$ must vanish. In other words, finite energy black holes exist only if they are asymptotically Ricci flat.

 Finally, we comment on stability of the black holes by deriving a corresponding linear eigenvalue problem of the governing model. However, due to the nature of our asymptotic approach, 
 we cannot establish a clear answer whether or not they are stable. To show stability, it requires all of the eigenfrequencies to be real, while for instability it needs the presence of at least an eigenfrequency with an imaginary part. 
 We can only show that there exist a family of solutions with 'eigenfrequency' $\omega^2 > 0$ for any $\Lambda \le 0$.  Our asymptotic analysis fails for $\Lambda > 0$ case because the physical spacetimes are bounded on $[r_H, r_C)$ where $r_H$ and $r_C$ are radius of the (event) horizon and the cosmological horizon, respectively.

The structure of this paper can be mentioned as follows. In Section \ref{sec:einsteinskyrme} we construct generally a class of static black holes of higher dimensional ($N \ge 4$) $SU(2)$ Einstein-Skyrme theory with the cosmological constant $\Lambda$. We discuss the behavior of the solutions near the boundaries, namely, the (event) horizon and the asymptotic region in Section \ref{sec:SolNearBounds}. In Section \ref{sec:ExisSolFinitE} we establish some results on the local-global existence and the uniqueness of the  solutions. We also consider the finiteness of the energy functional. In Section  \ref{sec:StableSol} we derive a linear eigenvalue problem using perturbative method with the detailed computation in Appendix \ref{sec:Strum-LiouvilleEq}. All of these sections are mainly focussed on the $\Lambda \le 0$ case. Finally, we  discuss the $\Lambda > 0$ case in Section \ref{sec:Lambdapositif}. We finish our paper with conclusion and future works in Section~\ref{sec:conclusion}.

\section{Einstein-Skyrme model in higher dimension}
\label{sec:einsteinskyrme}

In this section we construct static hairy skyrmionic black holes in higher dimension ($N \ge 4$). Our starting point is to consider that $N$ dimensional spacetime $({\mathcal M}^N, g_{\mu\nu})$ is a Lorentzian manifold ${\mathcal M}^N$ equipped with a\linebreak Lorentzian metric $g$ which is conformal to $ \mathcal{M}^4 \times \mathcal{N}^{N-4}$ with $\mathcal{M}^4$ being the four dimensional spacetime and $\mathcal{N}^{N-4}$ being the spatial extra dimensional submanifold assumed to be compact. The local coordinates on ${\mathcal M}^N$ are given by $x^\mu = (t,r,\theta,\varphi,x^i)$ where $x^i$ are local coordinates for $\mathcal{N}^{N-4}$. The Greek indices denote the spacetime index, $\mu,\nu = 0,1,\dots,N-1$ and the Latin indices denote extra dimensions index, $i,j = 4,\dots,N-1$. The ansatz metric in this paper is chosen to be static given by
\begin{align}
\label{AnsatzMetric} 
ds^2 &= - e^{2\delta(r)}B(r) dt^2 + B(r)^{-1}dr^2 \\
\notag &\quad + r^2 \left(d\theta^2 + \sin^2 \theta \;d\varphi^2\right) + r^2 C(r) \hat{g}_{ij}(x^i) \;dx^i dx^j ,
\end{align}  
where $r$ is the radial coordinate and the function $B(r)$ has the particular form 
\begin{equation} 
B(r) = \lambda -\frac{2m(r)}{\left(N-3\right)r^{ N-3 }} - \frac{2\mathrm{\Lambda }}{\left(N-1\right)\left(N-2\right)}r^2   , \label{Bdefinition}
\end{equation} 
with $\lambda > 0$ that depends on the Ricci scalars of 2-sphere $S^2$ and $(N-4)$-submanifold $\mathcal{N}^{N-4}$.  The function $C(r)$ must be smooth and be a positive valued function. In this section we mainly discuss the model with the cosmological constant $\Lambda \le 0$, whereas for $\Lambda > 0$ case we put the discussion  in Section \ref{sec:Lambdapositif}.  For the ansatz \eqref{AnsatzMetric} to describes a black hole solution, we have to take the following assumptions: In the asymptotic region the behavior of the functions $m(r)$ and $\delta(r)$ is given by \cite{ShiikiPRD2005}
\begin{equation}
m(r) \rightarrow M > 0  ,\qquad \delta(r) \rightarrow 0\quad \mathrm{as} \quad r \rightarrow \infty  , \label{asympcon}
\end{equation}
while at the event horizon $r=r_H$ (see the discussion in the next section), we should have $B(r_H)=0$ such that
\begin{equation} 
 m(r_H) = \frac{\left(N-3\right)\lambda}{2}{r_H}^{\left(N-3\right)} - \frac{\Lambda \left(N-3\right)}{\left(N-1\right)\left(N-2\right)}{r_H}^{\left(N-1\right)}\quad \mathrm{and}\quad  \delta \left(r_H\right) \equiv \delta_H  , \label{quantitiesathorizon}
\end{equation} 
with $\delta_H$ a positive constant. The behavior of $C(r)$ will be discussed in the next section. 

The submanifold  $\mathcal{N}^{N-4}$ has to be Einstein, namely
\begin{equation}
\hat{R}_{ij} = \Lambda_{N-4} \ \hat{g}_{ij}  ,
\end{equation}
where $\Lambda_{N-4}$ is the cosmological constant with $\Lambda_{N-4} \ge 0$ and $\hat{R}_{ij}$ is Ricci tensor in $N-4$ dimensions for $N \ge 6$. Thus, the constant $\lambda$ has to be of the form
\begin{equation}
\lambda = \frac{(2 + (N-4)  \Lambda_{N-4} )}{(N-2)(N-3)}  .
\end{equation}
It is important to note that for $N=5$ this extra dimensional submanifold would be a circle, so the Riemann tensor is trivial, i.e. $\hat{R}_{ijkl} \equiv 0$. Therefore, $\hat{R}_{ij} \equiv 0$ and $\Lambda_1 \equiv 0$ in this case. For $N=4$, we have the trivial case.

Now, let us write down the action of higher dimensional Einstein-Skyrme with cosmological constant, namely, 
\begin{align}
\label{ESAction}
S=\int{\sqrt{-g}\ d^Nx}\biggl[&\frac{1}{2}\left(R-2\Lambda\right)+\frac{F_\pi^2}{16}\ g^{\mu\nu}  {\mathrm{Tr}} \left(L_\mu{L_\nu}^\dag\right)\\
&+\frac{1}{32a^2}g^{\mu\rho}g^{\nu\sigma} \mathrm{Tr} \left(\left[L_\rho,L_\sigma\right]\left[L_\mu,L_\nu\right]^\dag\right)\biggr],\notag 
\end{align}
where $g$ is the determinant of the spacetime metric $g_{\mu\nu}$, $U$ denotes an $SU(2)$ chiral field, and  $L_\mu=U^\dag\partial_\mu U$.  Varying \eqref{ESAction} with respect to the spacetime metric $g_{\mu\nu}$, we obtain the Einstein field equation
\begin{equation}
\label{EFE}
R_{\mu \nu } + \Lambda g_{\mu \nu } = T_{\mu \nu } - \frac{g_{\mu \nu } \left(T-N\mathrm{\Lambda }\right)}{\left(N-2\right)}  ,
\end{equation} 
where $R_{\mu\nu}$ is the Ricci tensor of ${\mathcal M}^N$ and $T_{\mu \nu }$ is the skyrmionic energy-momentum tensor given by
\begin{align} 
\label{EnergyMomentumTensor} 
T_{\mu \nu }  &= \frac{F^2_{\pi }}{8}g_{\mu \gamma }g_{\nu \lambda }\ \left(g^{\gamma \rho }g^{\lambda \sigma } \mathrm{Tr} \left(L_{\rho }{L_{\sigma }}^{\dagger }\right)-\frac{1}{2}g^{\gamma \lambda }g^{\alpha \beta } \mathrm{Tr} \left(L_{\alpha }{L_{\beta }}^{\dagger }\right)\right)  \\
\notag &\quad  + \frac{1}{16a^2}g_{\mu \gamma }g_{\nu \lambda }\left(g^{\gamma \rho }g^{\lambda \alpha }g^{\sigma \beta }\ +g^{\gamma \sigma }g^{\lambda \beta }g^{\rho \alpha }-\frac{1}{2}g^{\gamma \lambda }g^{\alpha \rho }g^{\beta \sigma }\ \right)\\
\notag &\qquad\times  \mathrm{Tr}\left(\left[L_{\rho },L_{\sigma }\right]{\left[L_{\alpha },L_{\beta }\right]}^{\dagger }\right) 
\end{align}
with $F_{\pi }$ and $a$ positive constants. The quantity $T$ is the trace of $T_{\mu \nu } $. To make the discussion of \eqref{EFE} clear, we take the ansatz metric \eqref{AnsatzMetric} and then the skyrmion field $U$ has particularly the form \cite{DateLMP1987, LuckPLB1986}
\begin{equation}
\label{UAnsatz}
U = e^{i \vec{\sigma}\cdot\hat{n}f(r)} = \cos f(r) + \mathrm{i}\; \hat{n}\cdot\vec{\sigma} \sin f(r)  ,
\end{equation}
known as the hedgehog ansatz where $\hat{n}$ is a normal vector in $(N-1)$-dimensional space and $\vec{\sigma} \equiv (\sigma_a)$  with $a = 1, \dots, N-1$ generators of a Lie group. The function $f(r)$ is a skyrmion field profile function. In this paper, we particularly take  $\vec{\sigma} = (\sigma_1, \sigma_2, \sigma_3, 0,\dots,0)$ where $\sigma_1, \sigma_2, \sigma_3$ are the Pauli matrices and $\hat{n}$ is a normal vector whose form is given by
\begin{equation}
\hat{n} = (\sin \theta \cos \varphi , \sin \theta \sin \varphi, \cos \theta, 0, \dots,0 )  . 
\end{equation} 
In other words, the skyrmion field $U$ is locally defined on $\mathcal{S}^3 \subseteq \mathcal{M}^4 $, where $\mathcal{S}^3$ is conformal to $\lR^+ \times S^2 $. This setup further implies that we cannot define globally a covariant topological current for $N \ge 5$ on $ \mathcal{M}^4 \times \mathcal{N}^{N-4}$, but we can have a local topological charge called local baryon number on $\mathcal{S}^3$ defined as
 \begin{equation}
Q = \frac{\epsilon^{abc}}{24 \pi^2} \int tr\left( L_a L_b L_c\right) r^2 \sin \theta dr d\theta  d\phi    ,   \label{Topocharge}
\end{equation}
where $a,b,c = 1,2,3$.  It is worth mentioning that to have a global topological current for $N \ge 5$ in our model, the skyrmion field $U$ should be an $O(N)$  valued field which is similar to  the case constructed  in \cite{Brihaye:2017wqa}. However, the general theory with $O(N)$ symmetry is more complicated than the model considered in that reference. We leave this general model for future work.

Inserting the ansatzs \eqref{AnsatzMetric} and \eqref{UAnsatz} into \eqref{EFE}, we then obtain  
\begin{align}
 \label{ReducedEinsteineq} 
&(N-4)\left[\frac{\delta' C'}{2C} -\frac{C''}{2C} - \frac{ {C'}^2}{4C^2} - \frac{C'}{r C} \right] + (N-2) \frac{\delta'}{r}\\
\notag &\quad  = \left(\frac{F_{\pi}^2}{4} + \frac{2 \sin^2 f}{a^2 r^2}  \right) {f'}^2 ,\\%
 \label{ReducedEinsteineq2}
 & - r (B' + B \delta') -(N-3) B+1 - (N-4) \frac{C' B r}{2 C}\\
\notag  &\quad = \frac{F_{\pi}^2}{4} \sin^2f 
  + \frac{(N-4)}{(N-2)a^2} B {f'}^2 \sin^2f \\
\notag &\qquad  +  \frac{(N-3)}{(N-2)} \frac{\sin^4f}{ a^2 r^2} + \frac{2\Lambda r^2 }{(N-2)}  ,
\end{align}
\begin{align}
\label{ReducedEinsteineq3}
&  -\frac{1}{2} C'' B r^2 - (N - 6) \frac{ {C' }^2 B r^2}{4C} - (N-3) B C' r\\
& -\frac{1}{2} B' C' r^2 - B' C r - (N-3)B C   \notag \\ 
& - \frac{1}{2} B \delta' C' r^2 - B \delta' C r  + \Lambda_{N - 4}\notag \\
&\qquad  = - \frac{2 B C}{(N-2)a^2} {f'}^2 \sin^2f -  \frac{C \sin^4 f}{a^2 r^2}  + \frac{2\Lambda C r^2}{(N-2)}  . \notag
\end{align}

 Then, to get the equation of motions of the function $f(r)$, we consider the static energy of this higher dimensional skyrmionic black hole that 
 is given by 
\begin{align} 
\label{Estatic} 
E & =  4\pi A({\mathcal N}^{N-4}) \int_{r_H}^{+\infty} dr   r^{N-2}  C^{\frac{(N-4)}{2}} e^{\delta}\\
\nonumber&\quad\times \left[\frac{F^2_{\pi }}{8} \left( B \left(f'\right)^2 + \frac{2\sin^2 f}{r^2} \right) \right.\\
 \nonumber 
&  \qquad \left. + \frac{\sin^2 f}{a^2 r^2} \left(  B (f')^2 +\frac{\mathrm{sin}^2 f\ }{ r^4}\right)\right]  ,
\end{align} 
where $A({\mathcal N}^{N-4})$ is the volume of the compact submanifold $\mathcal{N}^{N-4}$ for $N \ge 4$  and $A({\mathcal N}^0) =1$. By varying \eqref{Estatic} with respect to $f$, we obtain the equation of motions for the  field $f$ to be given by
\begin{align}
\label{SkyrmionEq}
&  B  f^{\prime\prime} + B' f' +  B \delta' f'  + (N-4)  \frac{B C' f'}{2C} \\
& - \Bigg[  \frac{\sin 2f}{2 r^2} \left(\frac{F_\pi^2}{4} + \frac{\sin^2{f}}{a^2 r^2}\right)  \nonumber \\
 &\quad - \left( (N-2)\frac{F_\pi^2}{8} + (N-4)\frac{\sin^2 f }{a^2 r^2} + \frac{f' \sin 2f }{2 a^2 r} \right) \frac{B f'}{r}  \Bigg]\notag\\
&\qquad\times \left(\frac{F_\pi^2}{8} + \frac{\sin^2{f}}{a^2 r^2}\right)^{-1}\notag\\
&\quad = 0  . \notag
\end{align}

It is worth to write down the norm of Riemann tensor related to the ansatz metric  \eqref{AnsatzMetric}

\begin{align}
 \label{normRiem}
 R_{\alpha\mu\beta\nu}R^{\alpha\mu\beta\nu} &=  \left({B}^{\prime\prime}+ 3{B}^\prime\delta^\prime +  B \left( (\delta^\prime)^2 + \delta^{\prime\prime}\right)\right)^2+\frac{2}{r^2}\left({B}^\prime + 2B \delta^\prime\right)^2\\
&\quad  + \frac{2(B')^2}{r^2} +\frac{4(B-1)^2}{r^4}  + \frac{1}{C^2 r^4} \hat{R}_{ijkl} \hat{R}^{ijkl}\notag \\
&\quad + (N-4)\Bigg[ \frac{1}{4 C^2 r^4} \left({B}^\prime + 2B\delta^\prime \right)^2 \left(\left(r^2 C\right)'\right)^2 \notag \\ 
&\qquad + \frac{B^2}{C^2 r^4}\Biggl(2C+4C^\prime r + C^{\prime\prime}r^2  -\frac{\left(\left(r^2 C\right)'\right)^2}{2Cr^2}+\frac{B'}{2B} \left(r^2 C\right)' \Biggr)^2 \notag\\
&\qquad  + \frac{2{B}^2}{C^2r^6} \left(\left(r^2 C\right)'\right)^2 
  + \frac{(N-5) {B}^2}{8 C^4 r^8} \left(\left(r^2 C\right)'\right)^4 \notag\\ 
&\qquad -  \frac{\Lambda_{N-4} B}{ C^3 r^6} \left(\left(r^2 C\right)'\right)^2 \Bigg]  ,
\notag
\end{align}
and Ricci scalar in this case has the form
\begin{align}
 \label{Ricciscalar}
R &= -B''-2 B \delta'' - 3 B' \delta' - 2 B(\delta')^2 -  \frac{2B }{r^2}\\
&\quad   - (N-4) \frac{B }{C r^2} \left( C'' r^2 + 4r C' + 2C\right)  \nonumber \\
&\quad - (N-4) (N-7) \frac{B}{4 C^2 r^4} \left(  \left( r^2 C\right)' \right)^2 \nonumber\\
&\quad  -  \frac{4}{r} \left(  B' +  \delta' B\right)  - \frac{(N-4) }{C r^2}  \left( B' +  \delta' B\right) \left( r^2 C\right)'  \nonumber \\
&\quad - (N-4)  \frac{2 B}{C r^3}  \left( r^2 C\right)' + \frac{2}{r^2} + (N-4)\frac{\Lambda_{N-4}}{r^2} ,\nonumber
\end{align}
which will be useful for our analysis later.

\section{Solutions near boundaries}
\label{sec:SolNearBounds}

 In this section we will discuss the behavior of the functions $\delta(r)$, $m(r)$, $C(r) $, and $f(r)$ in the near horizon limit  ($r \to r_H$) and in the asymptotic limit  ($r \to + \infty$) which are related to the geometries around the regions. In particular, we consider only  the model with $\Lambda \le 0$. \\

\subsection{Near horizon}
\label{subsec:NearHor}

In the near horizon region, the  functions $\delta(r)$, $m(r)$, $C(r) $, and $f(r)$ can be expanded as
\begin{align}
\delta(r) &= \delta_H + \delta_1 (r-r_H) + O\left((r-r_H)^2 \right)  , \nonumber\\
m(r) &=  m(r_H) + m_1 (r-r_H) + O\left((r-r_H)^2 \right)   ,  \nonumber\\
C(r) &= C_H + C_1 (r-r_H) + O\left((r-r_H)^2 \right)   ,  \nonumber\\
f(r) &= f_H + f_1 (r-r_H) + O\left((r-r_H)^2 \right)  , \label{Expandhorizon}
\end{align}
where $\delta_H, m(r_H) , C_H, f_H$  are positive constants, while $\delta_1, m_1 , C_1, f_1$ are real constants. Inserting the expansion \eqref{Expandhorizon} into \eqref{ReducedEinsteineq}, \eqref{ReducedEinsteineq2}, \eqref{ReducedEinsteineq3} and \eqref{SkyrmionEq} evaluated at $r =r _H$, we have some results as follows: 
\begin{equation}
  m_1 = \frac{1}{2} (N-3) r_H^{N-4}  \left(  \frac{F_{\pi}^2 }{4}\sin^2f_H +  \frac{(N-3)}{(N-2)} \frac{\sin^4f_H}{ a^2 r_H^2}  +  (N-3)\lambda -1\right)    , \label{SolEinsteinExpandhorizon}
 \end{equation}
 \begin{equation}
C_1 =  - \frac{ 2 C_H\left(   \frac{\Lambda_{N-4}}{C_H }-1 +  \frac{F_{\pi}^2 }{4 } \sin^2f_H +  \frac{(2N-5)}{(N-2)} \frac{ \sin^4f_H}{a^2 r_H^2}   \right)}{r_H \left(  \frac{F_{\pi}^2 }{4} \sin^2f_H   + \frac{(N-3)}{(N-2)} \frac{ \sin^4f_H}{ a^2 r_H^2} + \frac{2 \Lambda r^2_H}{(N-2)} - 1\right)}     , \label{SolEinsteinExpandhorizon1} 
 \end{equation}
  \begin{equation}
f_1 =  - \frac{\sin 2 f_H \left( \frac{ F_{\pi}^2}{4} +  \frac{\sin^2 f_H}{a^2 r_H^2}  \right)}{2 r_H^3    \left( \frac{ F_{\pi}^2}{8} +  \frac{\sin^2 f_H}{a^2 r_H^2}  \right) \left(  \frac{F_{\pi}^2 \sin^2f_H}{4 r^2_H}    + \frac{(N-3)}{(N-2)} \frac{ \sin^4f_H}{2 a^2 r_H^4}  + \frac{2 \Lambda}{(N-2)} - \frac{1}{r_H^2} \right) }     ,   \label{SolEinsteinExpandhorizon3}
\end{equation}
\begin{equation}
 \delta_1 =   \frac{ \left( \frac{ F_{\pi}^2}{4} +  \frac{2 \sin^2 f_H}{a^2 r_H^2}  \right) f_1^2 +(N-4) \left(  \frac{ C_1 }{2 C_H}   + \frac{2}{r_H}\right)  \frac{ C_1 }{2 C_H}   }{  \left( (N-4) \frac{ C_1 }{2 C_H}   + (N-2) \frac{1}{r_H}\right) }    
  . \label{SolEinsteinExpandhorizon2}
\end{equation}
These show that we have a parameter space   spanned by $f_H,  r_H, C_H, \delta_H, $ 
$N, \Lambda, \Lambda_{N-4}, F_{\pi}, a$. 
 In the trivial case $N=4$ with $C_H \equiv 0$ and $\Lambda_{N-4} \equiv 0$,  the value of $f_H$ is constrained by the baryon number  \eqref{Topocharge} given by
  \begin{equation}
Q  = - \frac{2}{\pi} \int_{f_H}^{f_0} \sin^2f df = \frac{1}{2\pi} \left( 2(f_H - f_0)+ \sin2f_0 -  \sin2f_H \right)  ,   \label{Topocharge1}
\end{equation}
 where $a,b,c = 1,2,3$ and $f_0 \equiv f(r \to +\infty)$. As discussed in  \cite{ShiikiPRD2005}, taking $f_0 = 0$ if the solution is a black hole, then 
 $f_H <  \pi $ and the charge $Q$ becomes a fractional number. Moreover,  $f_H$ is a shooting parameter and it has been  shown numerically that there exists upper branch solutions in terms of  $f_H \in (0, \pi) $ where $f^{\mathrm{u}}_H$ and $f^{\mathrm{l}}_H$ are upper and lower branch solutions, respectively,  with   $f^{\mathrm{u}}_H > f^{\mathrm{l}}_H$ \cite{ShiikiPRD2005}.  It is worth to notice that if $f_0 \ge 0$, then $f_H \in I_{f_0}$ such that $I_0 = (0, \pi)$ and $Q$ is still a fractional number \footnote{We will see in Subsection \ref{subsec:GlobExis} that  if we want to have finite energy solutions, then $f_0 = 0$.}.

 For $N \ge 5$, we could take $f_H \in I_{f_0} $ since the charge \eqref{Topocharge1} can be locally defined on $\lR^+ \times S^2$, but  the situation is more complicated to be shown numerically  since we have two shooting parameters, namely, $C_H$ and $f_H$ and other seven free parameters. Moreover,  any input of these  parameters   does not give us a complete picture because it is still unclear whether our black holes  are physical and we have a consistent theory.

Therefore, we have to define another definition of branches that is easier to compute in general and gives us the physical picture of a black hole near its horizon.  This can be achieved by analyzing the behavior of Ricci scalar \eqref{Ricciscalar} around the horizon whose form simplifies to 
 \begin{align}
\label{Ricciscalarhor}
R &=  \frac{ F_{\pi}^2 \sin^2f_H }{2 r^2_H} + \frac{(2N^2-13N+14)}{2(N-2)} \frac{\sin^4f_H}{a^2 r_H^4} \\
\notag &\quad  - (N-4) \frac{\Lambda_{N-4} }{C_H r^2_H} + 2\Lambda +  \frac{1}{r^2_H} \left( 2 + (N-4) \Lambda_{N-4}\right)  . 
\end{align}
At this region, the topology of the spacetime changes to ${\mathcal T}^2 \times S^2 \times {\mathcal N}^{N-4}$  such that we have only two consistent pictures on the 2-surface ${\mathcal T} ^2$, namely, either  ${\mathcal T} ^2 \simeq \lR^2$ or  ${\mathcal T} ^2 \simeq AdS_2$ \cite{Kunduri:2007}. So, the four terms in the right hand side in \eqref{Ricciscalarhor} yield a condition
\begin{equation}
  \frac{ F_{\pi}^2 \sin^2f_H }{2 r^2_H} + \frac{(2N^2-13N+14)}{2(N-2)} \frac{\sin^4f_H}{a^2 r_H^4}  - (N-4) \frac{\Lambda_{N-4} }{C_H r^2_H} + 2\Lambda \le 0   , \label{geomhorizoncon}
\end{equation}
where the equality means ${\mathcal T}^2 \simeq \lR^2$.  We define branch as a class of black holes satisfying \eqref{geomhorizoncon}. Using ${\mathcal T} ^2$, we can classify it as follows. We call  a branch to be \textit{primary} if  it consists of black holes with two possible topology of ${\mathcal T} ^2$, namely, either ${\mathcal T} ^2 \simeq \lR^2$ or  ${\mathcal T} ^2 \simeq AdS_2$.  On the other hand, a branch is called \textit{secondary} if it restricts to one possible topology of ${\mathcal T} ^2$. As we will see below, this classification can immediately be  identified by the value of the cosmological constant $\Lambda$.

Let us consider some cases  as follows. In the case of $N=4$ and $\Lambda \le 0$,  since $f_H \in I_{f_0}$ it is easy to see that for any $r_H > 0$ and $\Lambda \le -\frac{a^2 F^4_{\pi}}{48}$ we only have ${\mathcal T}^2 \simeq AdS_2$ which is the secondary branch. For the primary branch, we have either
\begin{equation}
\frac{\sin^2f_H}{r_H^2} \le  \frac{a^2 F^2_{\pi}}{6} \left(1-\left(1+\frac{48 \Lambda}{a^2 F^4_{\pi} }\right)^{1/2}\right)  ,
\label{SyaratN=4kesatu}
\end{equation}
or
\begin{equation}
r^{-2}_H \ge \frac{a^2 F^2_{\pi}}{6} \left(1+\left(1+\frac{48 \Lambda}{a^2 F^4_{\pi} }\right)^{1/2}\right)  ,
\label{SyaratN=4kedua}
\end{equation}
 with $-\frac{a^2 F^4_{\pi}}{48} < \Lambda  \le 0$. Here, it is possible to have either ${\mathcal T}^2 \simeq AdS_2$ or ${\mathcal T}^2 \simeq \lR^2$. In these latter models for  ${\mathcal T}^2 \simeq \lR^2$, we could have some linear stable solutions as shown numerically in \cite{ShiikiPRD2005}.

For $N=5$ and $\Lambda \le 0$, we have for any $r_H > 0$ and $\Lambda \le -\frac{3}{16}a^2 F^4_{\pi}$,  ${\mathcal T}^2 \simeq AdS_2$ which is the secondary branch. For the primary branch, we have either
\begin{equation}
\frac{\sin^2f_H}{r_H^2} \le \frac{3}{2} a^2 F^2_{\pi} \left(1-\left(1+\frac{16 \Lambda}{3a^2 F^4_{\pi} }\right)^{1/2}\right)  ,
\label{SyaratN=5kesatu}
\end{equation}
or
\begin{equation}
r^{-2}_H \ge  \frac{3}{2} a^2 F^2_{\pi} \left(1+\left(1+\frac{16 \Lambda}{3a^2 F^4_{\pi} }\right)^{1/2}\right)  ,
\label{SyaratN=5kedua}
\end{equation}
with $-\frac{3}{16}a^2 F^4_{\pi}< \Lambda  \le 0$. In this case, it is possible to have either ${\mathcal T}^2 \simeq AdS_2$ or ${\mathcal T}^2 \simeq \lR^2$.
In the case of $N \ge 6$ and  $\Lambda \le 0$, we have
\begin{align}
\label{SyaratfisisLambdanegatif}
 \frac{\sin^2f_H}{r_H^2}  &\le -\frac{(N-2) a^2 F^2_{\pi} }{2(2N^2 -13N+14)} \\
\notag &\quad  + \Biggl(\frac{(N-2)^2 a^4 F^4_{\pi}}{4(2N^2 -13N+14)^2}  \\
\notag &\qquad\quad+ \frac{2(N-2) a^2  }{(2N^2 -13N+14)} \left((N-4)\frac{\Lambda_{N-4}}{C_H r_H^2}  - 2 \Lambda \right)  \Biggr)^{1/2}  , 
\end{align}
and it appears to be either ${\mathcal T}^2 \simeq AdS_2$ or ${\mathcal T}^2 \simeq \lR^2$ which shows that only the primary branch exists. Note that the $\Lambda > 0$ case  will be discussed in section \ref{sec:Lambdapositif}.

\subsection{Around the asymptotic region}
\label{subsec:AsympReg}

In the asymptotic region we recall the assumption \eqref{asympcon}   as such that the decrease rate of the functions $\delta(r)$ and $m(r)$ should be greater than $O\left(r^{-(N-3)} \right)$. Therefore, the functions $\delta(r)$ and $m(r)$ should have the form
\begin{align}
\delta(r) &=    O\left(r^{-2(N-3)} \right)  , \nonumber\\
m(r) &=  M + O\left(r^{-2(N-3)} \right)   ,   \label{Expandasymp}
\end{align}
where $M > 0$. In addition, the function $C(r)$ could be expanded as 
\begin{equation}
C(r) =  C_0 +  \frac{\tilde{C}_1}{r^{n}} + O\left(r^{-(n+1)} \right)      , \label{CrExpandasymp}
\end{equation}
with $C_0 > 0, \tilde{C}_1 \in \lR$, and $n \ge 1$. Then, the Ricci scalar \eqref{Ricciscalar} is simplified to
\begin{equation}
R = \frac{2 N \Lambda}{(N-2)}  + O\left(r^{-n} \right)    ,   \label{Ricciscalarasym}
\end{equation}
showing that  the geometry converges to Einstein with cosmological constant $2  \Lambda / (N-2)$ (or Ricci-flat with $\Lambda = 0$). Moreover, it can be shown using \eqref{normRiem} that the spacetime becomes  maximal as $r \gg M$. In this case the skyrmionic scalar $f(r)$ would have the form  
\begin{equation}
f(r) = f_0   +  \frac{\tilde{f}_1}{r^{n_1}}+ O\left(r^{-(n_1 + 1)} \right)  , \label{fasympcon}
\end{equation} 
where $n_1 \ge 1$ and $f_0, \tilde{f}_1 \in \lR$ which shows that it has to be frozen as $r \to +\infty$. As we will see in the next section, the value of $f_0$ and the bound of $n_1$ are related to the finiteness of the energy functional \eqref{Estatic}.

In order to obtain the mass of the black hole, we have to consider the Komar integral \cite{Kastor:2008cqg, Kastor:2009cqg}  
\begin{equation}
{\mathcal{K}} = \int_{\partial\Sigma} dS_{\mu\nu} \left( \nabla^{\mu} \xi^{\nu} + \omega^{\mu\nu} \right) \  , \label{komarint}
\end{equation} 
where $\partial\Sigma$ is the boundary of a spatial hypersurface of the spacetime ${\mathcal M}^N$. The quantity $\xi^{\mu}$ is a Killing vector of ${\mathcal M}^N$, while $\omega^{\mu\nu}$ is an antisymmetric tensor satisfying \cite{Gunara:2010iu}
\begin{equation}
\nabla_{\mu}  \omega^{\mu\nu} = R^{\nu}_{~ \mu} ~ \xi^{\mu} \  . \label{divergenceomega}
\end{equation}
In our case,  $\xi^{\mu}$ is the time-like Killing vector whose form is $\xi^{\mu} = (1, 0,\dots,0)$. Then, the non-zero solution of \eqref{divergenceomega} is
\begin{equation}
\omega^{10} = - \frac{1}{2}(B' + 2 \delta' B) + \frac{M}{4 \pi A(\mathcal{N}^{N-4})} e^{-\delta} r^{-(N-2)} C^{-(N-4)/2} \  , \label{solusiomega}
\end{equation}
while
\begin{equation}
 \nabla^1 \xi^0 =  \frac{1}{2}(B' + 2 \delta' B)   , 
\end{equation}
where $A(\mathcal{N}^{N-4})$ is the volume of the compact submanifold $A(\mathcal{N}^{N-4})$ for $N \ge 6$, $A(\mathcal{N}^1) = 2 \pi$, and  $A(\mathcal{N}^0) = 1$. Thus, the mass of our black hole is given by
\begin{equation}
 M_{\text{BH}} =\lim_{r \to +\infty} M = M  , \label{BHmass}
\end{equation}
where we have used 
\begin{equation}
dS_{10} = \frac{1}{2}  e^{\delta} r^{N-2} C^{(N-4)/2}  dA(S^2)   dA(\mathcal{N}^{N-4})  . 
\end{equation}
Some comments are in order. First, in general the spacetime $\mathcal{M}^N$ cannot converge to Einstein as $r \to +\infty$ unless the function $C(r)$ has the form of \eqref{CrExpandasymp} in the region. If this not the case, then we still have  $C(r)$ and $f(r)$, i.e.\ the case which is excluded in this paper. Second, the second term in the right hand side of \eqref{solusiomega} comes from the non-trivial cohomology of $S^2 \times \mathcal{N}^{N-4}$ which is related to the distribution of the black hole mass. As we have shown above, this mass still converges to a constant for a general form of $C(r)$.


\section{Existence of solutions with finite energy}
\label{sec:ExisSolFinitE}

In this section we show local-global existence and uniqueness of black hole solutions of \eqref{ReducedEinsteineq}, \eqref{ReducedEinsteineq2}, \eqref{ReducedEinsteineq3}, and  \eqref{SkyrmionEq}. We use Picard's iteration and the contraction mapping properties to prove the local existence and the uniqueness. Then, we establish the global existence using maximal solution technique and show using  the expansions \eqref{Expandasymp}, \eqref{CrExpandasymp}, and \eqref{fasympcon} that this solution is finite. Finally, solutions with finite energy functional \eqref{Estatic} are discussed. Here, we still assume $\Lambda \le 0$.

\subsection{Local existence and smoothness}
\label{subsec:LocExisSmooth}

Let us first recall Eqs.\ \eqref{ReducedEinsteineq}--\eqref{ReducedEinsteineq3} and \eqref{SkyrmionEq} for $N \ge 5$. In the case at hand, we have two constraints in this case, namely  \eqref{ReducedEinsteineq} and \eqref{ReducedEinsteineq2}, which can be cast into
\begin{align}
 \label{ReducedEinsteineqconstraints}
  & \delta ' =  \frac{\left( \frac{F_{\pi}^2}{4}  +   \frac{(N-1)}{(N-2)} \frac{2 \sin^2f }{a^2 r^2} \right)  }{\left((N-4) \frac{C'}{C}  + (N-3)  \frac{2}{r}  \right) }  {f'}^2  \\
&\qquad + (N-4) \resizebox{.8\textwidth}{!}{$\frac{ \left( \frac{1}{Br^2}  \left(  \frac{ \sin^4f }{a^2 r^2} - \frac{2 \Lambda r^2}{(N-2)} \right) -     \frac{B' C'}{2BC} +  \frac{B'}{Br}   
 - (N-3)   \frac{1}{r^2}  - (N-7)   \frac{{C'}^2}{4 C^2}  -  (N-4)   \frac{B'C'}{rBC}   \right)  }{\left((N-4) \frac{C'}{C}  + (N-3)  \frac{2}{r}  \right) }$} , \nonumber\\
 & F_1 B' + F_2 B = F_3   ,\nonumber
\end{align}
respectively, where 
\begin{align}
F_1 &\equiv  r -  \frac{(N-4)   \left( 1 + \frac{rC'}{2C} \right)  }{ (N-4)  \frac{C'}{C} + (N-3) \frac{2}{r} }    ,  \\
F_2 &\equiv  (N-3) + (N-4 ) \frac{rC'}{2C} +  \frac{(N-4)}{(N-2)} \frac{  {f'}^2 }{a^2}   \sin^2f   \nonumber\\  
 &\quad  +  (N-4 )  \resizebox{.8\textwidth}{!}{$\left(  \frac{  \left( \frac{F_{\pi}^2}{4}  +   \frac{(N-1)}{(N-2)} \frac{2 \sin^2f }{a^2 r^2} \right)    {f'}^2 - \frac{(N-3)}{r^2} - ( N-7)    \frac{{C'}^2}{4 C^2}  -  (N-4)   \frac{C'}{rC}  } { (N-4)  \frac{C'}{C} + (N-3) \frac{2}{r} } \right)$}  , \nonumber\\
F_3 &\equiv  1-  \frac{F_{\pi}^2}{4} \sin^2f  -     \frac{(N-3)}{(N-2)}  \frac{ \sin^4f }{a^2 r^2} - \frac{2 \Lambda r^2}{(N-2)} \nonumber \\
&\quad - \frac{1}{r}  \left(  \frac{ \frac{ \sin^4f }{a^2 r^2} - \frac{2 \Lambda r^2}{(N-2)}   } { (N-4)  \frac{C'}{C} + (N-3) \frac{2}{r} } \right)  ,  \nonumber
\end{align}
and $B$ is given by \eqref{Bdefinition} and $B'$ has the form
\begin{equation}
B' = - \frac{2 m'}{(N-3) r^{N-3}}  +  \frac{2 m}{ r^{N-2}} - \frac{4 \Lambda r}{(N-1) (N-2) }  . 
\end{equation}
Using \eqref{Bdefinition}, the second equation in \eqref{ReducedEinsteineqconstraints} gives us
\begin{align}
 \label{SolReducedEinsteineqconstrm}
m &=    \frac{1}{2} (N-3) r^{N-3} \left(  \lambda - \frac{2\mathrm{\Lambda }}{\left(N-1\right)\left(N-2\right)}r^2 \right. \\
\notag&\quad \left. -  \exp \left(- \int \frac{F_2}{F_1} dr \right) \left[  \int \frac{F_3}{F_1}  \exp \left(\int \frac{F_2}{F_1} dr \right)  dr + const. \right]  \right)  ,
\end{align}
showing that $ m= m(C, f, C', f', r) $ and $\delta' = \delta'(C, f, C', f', r)  $. It implies that the dynamical variables are   $(C,f)$.  Let $I \equiv [r, r+ \epsilon]$ where $\epsilon$ is a small positive constant and $U\subset \lR^4$ be an open set. We define  the conjugate  fields
\begin{align}
p_C &\equiv C'   , \nonumber\\
p_f &\equiv f'   ,  
\end{align}
such that we can set ${\bf{u}} \equiv ( C, f, p_C, p_f)$.
\begin{lemma}
	\label{lemmalocalLipshitzconstraint}
Both functions $\delta' ({\bf{u}},r )$ and $m({\bf{u}},r )$ in   \eqref{ReducedEinsteineqconstraints}  and \eqref{SolReducedEinsteineqconstrm}, respectively, are locally Lipschitz with respect to $\bf{u}$.	
\end{lemma}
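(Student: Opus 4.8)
The plan is to show that $\delta'$ and $m$, viewed as functions of $\mathbf{u}=(C,f,p_C,p_f)$ on a suitable domain, are built up by finitely many algebraic operations (sums, products, quotients, compositions with smooth elementary functions such as $\sin$, $\exp$, and powers of $r$) applied to the coordinate projections, and that all such operations preserve the local-Lipschitz property provided denominators stay bounded away from zero. Concretely, I would fix a point $\mathbf{u}_0=(C_H,f_H,\ldots)$ coming from the near-horizon data \eqref{Expandhorizon} (so $C_H>0$, $B(r_H)=0$ but $B'(r_H)\neq 0$ away from the horizon on $I=[r_H+\epsilon_0, r_H+\epsilon]$, etc.), and choose a compact neighbourhood $K=\bar U\times I\subset\lR^4\times\lR$ of $(\mathbf{u}_0,r_0)$ on which the various denominators appearing in \eqref{ReducedEinsteineqconstraints} and \eqref{SolReducedEinsteineqconstrm} — namely $C$, $B$, $r$, and the combination $(N-4)\frac{C'}{C}+(N-3)\frac{2}{r}$ — are each bounded below in absolute value by a positive constant. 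The existence of such a $K$ is exactly where the genuine hypotheses enter (we need to stay off the horizon so that $B\neq 0$, and $C$ positive bounded away from $0$), so this ``domain-choosing'' step is the first thing to pin down carefully.

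**Next I would** establish the elementary closure lemmas on $K$: (i) each projection $\mathbf{u}\mapsto C$, etc., is globally Lipschitz (it is $1$-Lipschitz); (ii) $r\mapsto r^{-k}$, $\sin$, $\cos$, $\exp$ restricted to the compact image of $K$ are $C^1$, hence Lipschitz with constant the sup of the derivative; (iii) sums and scalar multiples of Lipschitz functions are Lipschitz; (iv) a product $gh$ of bounded Lipschitz functions is Lipschitz with constant $\|g\|_\infty\mathrm{Lip}(h)+\|h\|_\infty\mathrm{Lip}(g)$; and (v) if $g$ is Lipschitz and bounded away from $0$ then $1/g$ is Lipschitz with constant $\mathrm{Lip}(g)/\inf|g|^2$. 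With these in hand, the verification is purely mechanical: $F_1, F_2, F_3$ in \eqref{ReducedEinsteineqconstraints} are finite combinations of the allowed building blocks with the only denominators being $C$ and $(N-4)\frac{C'}{C}+(N-3)\frac{2}{r}$, so $F_1, F_2, F_3$ and then $\delta'=\big(\text{bounded Lipschitz numerator}\big)/\big(\text{bounded-below Lipschitz denominator}\big)$ are locally Lipschitz on $K$.

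**For $m$** in \eqref{SolReducedEinsteineqconstrm} the only subtlety beyond the above is the two nested $r$-integrals: I would note that $\int\frac{F_2}{F_1}dr$ and $\int\frac{F_3}{F_1}\exp(\int\frac{F_2}{F_1}dr)\,dr$ are, for each fixed $\mathbf{u}$, ordinary antiderivatives in $r$, and — thinking of them pointwise, with $\mathbf{u}$ frozen as a parameter — differentiation in $r$ recovers \eqref{ReducedEinsteineqconstraints}; more to the point, since what we actually need for the Picard argument of the next subsection is Lipschitz dependence of the \emph{right-hand sides of the first-order system} on $\mathbf{u}$, it suffices to observe that $m$ enters those right-hand sides only through $B$ and $B'$, and by the constraint \eqref{ReducedEinsteineqconstraints} itself, $B' = (F_3 - F_2 B)/F_1$ with $B$ expressible from \eqref{Bdefinition}; so along the would-be solution $m$ is determined algebraically and one never needs to Lipschitz-estimate the integral operator directly. (If one insists on treating \eqref{SolReducedEinsteineqconstrm} literally, the integral-of-a-Lipschitz-family is again Lipschitz on $K$ because $|\int_{r_0}^r g(\mathbf u,s)\,ds - \int_{r_0}^r g(\mathbf v,s)\,ds|\le |I|\,\mathrm{Lip}_{\mathbf u}(g)\,\|\mathbf u-\mathbf v\|$, and composition with $\exp$ on the compact image preserves this.)

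**The main obstacle** is not any single estimate but the bookkeeping of \emph{which} denominators must be controlled and the honest verification that the chosen neighbourhood $K$ can be taken to avoid all of them simultaneously: one must use that we are working strictly away from the horizon (so $B$ and $B'$ are nonzero), that $C>0$ (built into the ansatz), and that $(N-4)\frac{C'}{C}+(N-3)\frac{2}{r}$ does not vanish at the base point — the latter being a genuine nondegeneracy assumption on the data that should be flagged. Once $K$ is fixed with all these quantities bounded away from $0$, the Lipschitz conclusion follows from the closure lemmas with no further analysis, so I would keep the write-up short: state the closure lemmas, fix $K$, and invoke them.
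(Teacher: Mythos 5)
Your proposal is correct and is essentially the paper's argument: the paper likewise proves the lemma by noting that $\delta'$ and $m$ are smooth algebraic expressions in the bounded $C^2$ data on a closed interval, establishing boundedness first (as in \eqref{boundedm}) and then converting differences into Lipschitz bounds via the mean-value-type estimate \eqref{anyFlocal}, which is just a hands-on version of your closure lemmas for sums, products, quotients and compositions. Your additional bookkeeping — pinning down a compact set on which $B$, $C$, $r$ and $(N-4)\tfrac{C'}{C}+(N-3)\tfrac{2}{r}$ stay away from zero, and treating the nested $r$-integrals in \eqref{SolReducedEinsteineqconstrm} either as parametric integrals of a Lipschitz family or by eliminating them through the constraint — is more explicit than what the paper records (it simply absorbs the exponential-weighted integrals into a constant times $|F_3/F_1|$), but it is consistent with, and if anything tightens, the paper's proof.
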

\begin{proof}
 First, we have
 \begin{equation}
\left|  m \right|_U  \le   \frac{1}{2} (N-3) r^{N-3}  \left(  \lambda + \frac{2 |\Lambda | }{\left(N-1\right)\left(N-2\right)}r^2  +   |const._0| \left| \frac{F_3}{F_1} \right|   + |const.|  \right)  , \label{boundedm}
\end{equation}
which is bounded since the functions $C(r)$ and $f(r)$ are at least a $C^2$-real function.  Using similar argument, we can also show that  $\delta' $ is bounded.  Then, for $\bf{u}, \tilde{\bf{u}} \in U$
\begin{align}
&\left|  \delta' ({\bf u}, r) -  \delta' ( \tilde{{\bf u}}, r) \right|_U  \le   \Bigg| \frac{\left( \frac{F_{\pi}^2}{4}  +   \frac{(N-1)}{(N-2)} \frac{2 \sin^2f }{a^2 r^2} \right)  }{\left((N-4) \frac{p_C}{C}  + (N-3)  \frac{2}{r}  \right) }  p_f^2  \nonumber\\
&\qquad + (N-4) \frac{ \left( \frac{1}{Br^2}  \left(  \frac{ \sin^4f }{a^2 r^2} - \frac{2 \Lambda r^2}{(N-2)} \right) -     \frac{B' p_C}{2BC} +  \frac{B'}{Br}   
   \right)  }{\left((N-4) \frac{p_C}{C}  + (N-3)  \frac{2}{r}  \right) }   \nonumber\\
&\qquad - (N-4) \frac{ \left(  (N-3)   \frac{1}{r^2}  + (N-7)   \frac{p_C^2}{4 C^2}  + (N-4)   \frac{B' p_C}{rBC}   \right)  }{\left((N-4) \frac{p_C}{C}  + (N-3)  \frac{2}{r}  \right) } \nonumber\\ 
&\qquad -  \left(C \to \tilde{C},  f \to \tilde{f}, p_C \to \tilde{p}_C,  p_f \to \tilde{p}_f \right) \Bigg|  ,  \nonumber\\
&\left| m ({\bf u}, r) -  m( \tilde{{\bf u}}, r) \right|_U  \le  \frac{1}{2} (N-3) r^{N-3}    |const._0| \left| \frac{F_3}{F_1} ({\bf u}, r)  - \frac{F_3}{F_1}( \tilde{{\bf u}}, r) \right| ,\hspace{-1em} 
\end{align}
 and for any smooth function $F(f)$ we have locally 
\begin{equation}
F( f) - F(\tilde{f} ) \leq \sup_{s\in[0,1]}\left[ F'( f + s(\tilde{f} - f)) \right]  (f -  \tilde{f})  , \label{anyFlocal}
\end{equation}
on $U$, we obtain that $\delta' $ and $m$ satisfy the local Lipschitz condition
\begin{align}
\left|  \delta' ({\bf u}, r) -  \delta' ( \tilde{{\bf u}}, r) \right|_U  &\le C_{ \delta'}(|\bf{u}|, |\tilde{\bf{u}}|) | \bf{u} - \tilde{\bf{u}}|  ,  \nonumber\\
\left| m ({\bf u}, r) -  m( \tilde{{\bf u}}, r) \right|_U  &\le C_m (|\bf{u}|, |\tilde{\bf{u}}|) | \bf{u} - \tilde{\bf{u}}|   ,  \label{localLipshitzconstraints}
\end{align}
on an open set $U\subset \lR^4$ where $C_{ \delta'}(|\bf{u}|, |\tilde{\bf{u}}|)$ and $C_m(|\bf{u}|, |\tilde{\bf{u}}|)$  are bounded positive-valued functions . 
\end{proof}
Next, we  cast \eqref{ReducedEinsteineq3} and \eqref{SkyrmionEq} into
\begin{equation}
\frac{d \bf{u}}{dr} =  \mathcal{J}({\bf u}, r)  ,  \label{ReducedEinsteineqSkyrmionEq1}
\end{equation}
where  
\begin{equation}
  \mathcal{J}({\bf{u}}, r)   \equiv  \left( \begin{array}{c}
p_C  \\
p_f \\
J_C \\
J_f
\end{array} \right)  , \label{fungsiJ}
\end{equation}
where
\begin{align}
J_C &\equiv - (N - 6) \frac{ p_C^2 }{2C} - (N-3) \frac{2p_C}{r} -  \frac{B' p_C}{Br}  -   \frac{B' C}{Br}\\ 
&\quad  - (N-3) \frac{2C}{r^2}  -  \delta' p_C - \frac{C}{r} \delta'   + \frac{2}{B r^2}\Lambda_{N - 4} \nonumber\\
&\quad + \frac{4  C}{(N-2)a^2 r^2} p_f^2 \sin^2f +  \frac{2 C }{a^2 Br^4}  \sin^4 f - \frac{4\Lambda C}{(N-2) B}   , \nonumber\\
J_f &\equiv - \delta'  p_f -  \frac{B' p_f}{B}   - (N-4)  \frac{ C' p_f}{2C}  \nonumber\\
&\quad  +  \frac{\sin{2f}}{2 B r^2} \left(\frac{F_\pi^2}{4} + \frac{\sin^2{f}}{a^2 r^2}\right) \left(\frac{F_\pi^2}{8} + \frac{\sin^2{f}}{a^2 r^2}\right)^{-1}   \nonumber\\
\nonumber &\quad -  \frac{p_f}{r} \left( (N-2) \frac{F_\pi^2}{8} + (N-4) \frac{\sin^2 f }{a^2 r^2} + \frac{ p_f \sin 2f }{2a^2 r^2} \right) \left(\frac{F_\pi^2}{8} + \frac{\sin^2{f}}{a^2 r^2}\right)^{-1}    ,
   \end{align}
with the constraint \eqref{ReducedEinsteineqconstraints} which can be viewed as $\delta' = \delta' ({\bf{u}}, r) $ and $m = m({\bf{u}}, r) $. 

We establish the local existence and the uniqueness of \eqref{ReducedEinsteineqSkyrmionEq1} using the contraction mapping theorem by showing that the operator $\mathcal{J}$ is locally Lipshitz.
\begin{lemma}
	\label{localLipshitz}
Let $\mathcal{J}$ be an operator given in \eqref{fungsiJ} and we have \eqref{localLipshitzconstraints} . Then,  it is locally Lipschitz with respect to $\bf{u}$.	
\end{lemma}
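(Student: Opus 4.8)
The plan is to show that $\mathcal{J}({\bf u},r)$ is locally Lipschitz with respect to ${\bf u}=(C,f,p_C,p_f)$ on any compact subinterval $I\subset(r_H,\infty)$ and any open bounded set $U\subset\lR^4$ whose closure avoids the loci where the denominators appearing in $J_C$ and $J_f$ vanish — in particular $C$ bounded away from $0$, $(N-4)\frac{C'}{C}+(N-3)\frac{2}{r}$ bounded away from $0$ (which enters through $\delta'$ and $m$), $B=B(C,f,C',f',r)$ bounded away from $0$, and $\frac{F_\pi^2}{8}+\frac{\sin^2 f}{a^2r^2}$ which is automatically bounded below by $\frac{F_\pi^2}{8}>0$. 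First I would note that each of the four components of $\mathcal{J}$ is a finite sum of products of the following building blocks: the coordinates $C,f,p_C,p_f$ themselves (trivially Lipschitz on $U$); smooth bounded functions of these such as $\sin^2 f$, $\sin 2f$, $\frac{1}{C}$, $\frac{1}{C^2}$; the functions $\delta'({\bf u},r)$ and $m({\bf u},r)$, which are locally Lipschitz and bounded by Lemma \ref{lemmalocalLipshitzconstraint}; and $B$ and $B'$, which through \eqref{Bdefinition} and the expression for $m$ are themselves locally Lipschitz and bounded in ${\bf u}$ on $U$, hence so are $\frac{1}{B}$ and $\frac{B'}{B}$ once $B$ is bounded away from zero. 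The inverse power $\left(\frac{F_\pi^2}{8}+\frac{\sin^2 f}{a^2r^2}\right)^{-1}$ is a smooth function of $f$ with bounded derivative on $U$, hence Lipschitz in $f$.

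The key algebraic step is the elementary fact, already recorded in \eqref{anyFlocal}, that a $C^1$ function composed with a bounded argument is locally Lipschitz, together with the product rule for Lipschitz functions: if $g,h$ are bounded and Lipschitz on $U$ with constants $L_g,L_h$ and bounds $M_g,M_h$, then $gh$ is Lipschitz with constant $M_g L_h + M_h L_g$. Applying this repeatedly term by term to $J_C$ and $J_f$ — using boundedness and the local Lipschitz property of $\delta'$, $m$, $B$, $B'$, $\frac{1}{B}$, $\frac{1}{C}$ from Lemma \ref{lemmalocalLipshitzconstraint} and the remarks above, and the smoothness of the trigonometric and rational factors — yields
\begin{equation}
\left| \mathcal{J}({\bf u},r) - \mathcal{J}(\tilde{\bf u},r) \right|_U \le C_{\mathcal{J}}(|{\bf u}|,|\tilde{\bf u}|)\,|{\bf u}-\tilde{\bf u}| ~ , \label{localLipJ}
\end{equation}
with $C_{\mathcal{J}}$ a bounded positive-valued function on $U$, built as a finite combination of the constants $C_{\delta'}$, $C_m$, the sup-norms of $C,f,p_C,p_f$ on $U$, the lower bounds on $|B|$, $C$, and the various structural constants $N,\Lambda,\Lambda_{N-4},F_\pi,a$. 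The first two components of $\mathcal{J}$ being simply $p_C$ and $p_f$ contribute the identity, which is trivially Lipschitz, so they pose no difficulty.

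The main obstacle — and the only place genuine care is needed — is controlling the factors $\frac{1}{B}$ and $\frac{B'}{B}$: $B$ depends on ${\bf u}$ through $m=m(C,f,C',f',r)$ in \eqref{SolReducedEinsteineqconstrm}, which itself involves the indefinite integrals $\int \frac{F_2}{F_1}\,dr$ and $\int \frac{F_3}{F_1}\,dr$, so one must argue that on the compact interval $I$ and the bounded set $U$ these integrals, and hence $m$, $B$, $B'$, depend Lipschitz-continuously on the $r$-dependent data $(C(r),f(r),C'(r),f'(r))$ in the relevant pointwise sense; this is exactly the content already used in the proof of Lemma \ref{lemmalocalLipshitzconstraint}, so I would invoke it directly rather than redo it, emphasizing only that the denominator $B$ must be kept bounded away from zero, which is guaranteed on the near-horizon and asymptotic regions by the analysis of Section \ref{sec:SolNearBounds} and can be imposed as part of the definition of $U$ on the interior. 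With \eqref{localLipJ} established, the hypotheses of the Picard–Lindelöf contraction argument for \eqref{ReducedEinsteineqSkyrmionEq1} are in place, which is what the subsequent local existence theorem will use.
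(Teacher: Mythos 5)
Your argument is correct and follows essentially the same route as the paper's proof: bound each term of $J_C$ and $J_f$ on $U$, then estimate the difference $\left|\mathcal{J}({\bf u},r)-\mathcal{J}(\tilde{\bf u},r)\right|_U$ term by term using the product rule for bounded Lipschitz functions, the local property \eqref{anyFlocal}, and Lemma \ref{lemmalocalLipshitzconstraint} for $\delta'$ and $m$ (hence for $B$, $B'$, $1/B$). One caveat: your remark that $B$ is bounded away from zero near the horizon is backwards, since $B(r_H)=0$ defines the horizon; this does not damage the Lipschitz estimate itself because, as you also say, the needed lower bound on $|B|$ is imposed through the choice of $U$ (the paper's own proof is equally silent on this point).
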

\begin{proof}
From \eqref{fungsiJ}, we obtain the following estimate
\begin{align}
\left|  J_C \right|_U  &\le  \biggl|  (N - 6) \frac{ p_C^2 }{2C} + (N-3) \frac{2p_C}{r} +  \frac{B' p_C}{Br}  +   \frac{B' C}{Br} \\
&\qquad  + (N-3) \frac{2C}{r^2} +  \delta' p_C + \frac{C}{r} \delta'    \biggr|  \nonumber \\ 
&\quad  + \biggl|  \frac{2}{B r^2} \Lambda_{N - 4} + \frac{4  C}{(N-2)a^2 r^2} p_f^2 \sin^2f \nonumber\\ &\qquad +  \frac{2 C }{a^2 Br^4}  \sin^4 f - \frac{4\Lambda C}{(N-2) B} \biggr|  , \nonumber
\\
\left|  J_f\right|_U & \le   \Biggl|\delta'  +  \frac{B' }{B}  +  (N-4)  \frac{ C' }{2C}  \nonumber\\
&\quad -  \frac{1}{r}\left((N-2) \frac{F_\pi^2}{8} + (N-4) \frac{\sin^2 f }{a^2 r^2} + \frac{ p_f \sin 2f }{2a^2 r^2} \right)\nonumber\\
&\qquad\times \left(\frac{F_\pi^2}{8} + \frac{\sin^2{f}}{a^2 r^2}\right)^{-1}  \Biggr| |p_f|   \nonumber\\
\nonumber  &\quad + \left|  \left(\frac{F_\pi^2}{4} + \frac{\sin^2{f}}{a^2 r^2}\right) \left(\frac{F_\pi^2}{8} + \frac{\sin^2{f}}{a^2 r^2}\right)^{-1}  \right| \left| \frac{\sin{2f}}{2 B r^2} \right|   .
\end{align}
Since both  $C(r)$, and $f(r)$ belong  at least to a class of $C^2$-real functions, then their values are bounded on any closed interval  $I$. Thus, $\left|  \mathcal{J}( {\bf u}, r) \right|_U$ is bounded on $U$.

Moreover, for $\bf{u}, \tilde{\bf{u}} \in U$, we also have
\begin{align}
&\left|  \mathcal{J}( {\bf u}, r) -  \mathcal{J}( \tilde{ {\bf u}}, r) \right|_U \le   \frac{1}{2}  (N - 6)\left| \frac{ p_C^2 }{C} - \frac{ \tilde{p}_C^2 }{\tilde{C}}\right| \\
&\qquad + (N-3) \frac{2}{r}  \left|p_C -\tilde{p}_C \right| +   \frac{1}{r}  \left|  \frac{B' p_C}{B}  -  \frac{\tilde{B}' \tilde{p}_C}{\tilde{B}}  \right|   \nonumber \\ 
&\qquad  + \frac{1}{r}  \left|  \frac{B' C}{B}   -  \frac{\tilde{B}' \tilde{C}}{\tilde{B}}  \right|    +  (N-3) \frac{2}{r^2}  \left| C -\tilde{C} \right|  + \left| \delta' p_C  - \tilde{\delta}' \tilde{p}_C \right|   \nonumber \\ 
&\qquad + \frac{1}{r}  \left| C \delta'   - \tilde{C} \tilde{\delta}'  \right|   + \frac{4  }{(N-2)a^2 r^2} \left| C p_f^2 \sin^2f - \tilde{C}  \tilde{p}_f^2 \sin^2 \tilde{f}   \right|   \nonumber \\ 
&\qquad +  \left| \frac{2 \Lambda_{N-4}}{B \tilde{B} r^2}  \right|  \left| m - \tilde{m}   \right|  +  \frac{2  }{a^2 r^4} \left| \frac{C}{B} \sin^4f - \frac{\tilde{C}}{\tilde{B}}   \sin^4 \tilde{f}   \right|   \nonumber\\
&\qquad  + \frac{4\Lambda }{(N-2) } \left| \frac{C}{B} - \frac{\tilde{C}}{\tilde{B}}   \right| 
 + \left| \delta' +  (N-4)  \frac{ p_C }{2C}   \right| |p_f - \tilde{p_f}| \nonumber\\ 
&\qquad+ \left| \delta'  - \tilde{\delta'}  +  \frac{1}{2} (N-4)  \left(  \frac{p_C}{C}  -\frac{\tilde{p}_C}{\tilde{C} }   \right) \right| |\tilde{p_f}| \nonumber\\
 &\qquad +  \left| \frac{ B' }{B} \right| |p_f - \tilde{p}_f | +  \left| \frac{ B' }{B} -  \frac{ \tilde{B}' }{\tilde{B}}\right| | \tilde{p}_f | \nonumber\\
&\qquad +  \Bigg|  \frac{\sin{2f}}{B r^2} \left(\frac{F_\pi^2}{4} + \frac{\sin^2{f}}{a^2 r^2}\right) \left(\frac{F_\pi^2}{4} + \frac{2 \sin^2{f}}{a^2 r^2}\right)^{-1}  \nonumber\\
&\qquad -  \frac{\sin{2 \tilde{f} } }{ \tilde{B} r^2}  \left(\frac{F_\pi^2}{4} + \frac{\sin^2 \tilde{f}  }{a^2 r^2}\right)  \left(\frac{F_\pi^2}{4} + \frac{2\sin^2 \tilde{f}  }{a^2 r^2}\right)^{-1} \Bigg|   \nonumber\\
&\qquad + \Bigg| \frac{2 p_f}{r} \left( (N-2) \frac{F_\pi^2}{8} + (N-4) \frac{\sin^2 f }{a^2 r^2} + \frac{ p_f \sin 2f }{2a^2 r^2} \right) \left(\frac{F_\pi^2}{4} + \frac{2\sin^2{f}}{a^2 r^2}\right)^{-1} \nonumber\\
&\qquad   -   \frac{2 \tilde{p}_f}{r} \left( (N-2) \frac{F_\pi^2}{8} + (N-4) \frac{\sin^2 \tilde{f}  }{a^2 r^2} + \frac{ \tilde{p}_f \sin 2 \tilde{f}  }{2a^2 r^2} \right) \left(\frac{F_\pi^2}{4} + \frac{2\sin^2 \tilde{f}  }{a^2 r^2}\right)^{-1} \Bigg|    , \nonumber
\end{align}
where we have simplified the notation $\tilde{\delta} \equiv \delta( \tilde{ {\bf u}}, r)$ and $\tilde{B} \equiv B( \tilde{ {\bf u}}, r)$. After some computations using the result in Lemma \ref{lemmalocalLipshitzconstraint} and the local property \eqref{anyFlocal} on $U$, then we obtain
\begin{equation}
\left|  \mathcal{J}({\bf u}, r) -  \mathcal{J}( \tilde{{\bf u}}, r) \right|_U  \le C_{ \mathcal{J}}(|\bf{u}|, |\tilde{\bf{u}}|) | \bf{u} - \tilde{\bf{u}}|  ,  \label{localLipshitzcon}
\end{equation}
proving that $ \mathcal{J}$ is locally Lipshitz with respect to $\bf{u}$.
\end{proof}

Next, we  write down  \eqref{ReducedEinsteineqSkyrmionEq1} in the integral form 
\begin{equation}
{\bf{u} }(r) = {\bf{u} }(r_H) + \int_{r_H}^{r}\:\mathcal{J}\left( {\bf{u} }(s), s \right)\:ds   . \label{IntegralEquation}
\end{equation}
Then, we define a Banach space
\begin{equation}
X \equiv \{ {\bf{u} } \in C(I,\lR^2) : \: {\bf{u} }(r_H) = {\bf{u} }_{0}, \: \sup_{r\in I}| {\bf{u} }(r)|\leq L_0 \}   ,
\end{equation}
equipped with the norm
\begin{equation}
|{\bf{u} }|_{X} = \sup_{r\in I}\:|\mathbf{u}(r)|  ,
\end{equation}
where $L_0$ is a positive constant. Finally, we introduce an operator $\mathcal{K}$ 
\begin{equation}
\mathcal{K}(\mathbf{u}(r)) = \mathbf{u}_{0} + \int_{r_H}^{r}\:ds \mathcal{J}\left(s,\mathbf{u}(s)\right). \label{OpKdefinition}
\end{equation}
Since $\mathcal{J}$ is locally Lipshitz function with respect to $\mathbf{u}$, then we have the following lemma \cite{Akbar:2015jya}: 
\begin{lemma}
\label{unigueness}
Suppose that we have an operator $\mathcal{K}$  defined in  (\ref{OpKdefinition}) and  there exists a positive constant $\varepsilon$ such that $\mathcal{K}$ is a mapping from $X$ to itself and $\mathcal{K}$ is a contraction mapping on $I = [r_H,r_H + \varepsilon ]$ with
\begin{equation}
\varepsilon   \leq \min\left(\frac{1}{C_{L_0}},\frac{1}{C_{L_0} L_0 + \|\mathcal{J}(r_H)\|}\right)  .
\end{equation}
Then, the operator $\mathcal{K}$ is a contraction mapping on $X$.
\end{lemma}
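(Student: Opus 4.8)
The plan is to verify the two standard hypotheses of the Banach fixed point theorem for $\mathcal{K}$ on the complete metric space $X$: first, that $\mathcal{K}$ maps $X$ into itself (invariance), and second, that $\mathcal{K}$ is a strict contraction on $X$ with respect to the sup-norm $|\cdot|_X$. Since $X$ is a closed subset of the Banach space $C(I,\lR^2)$ (actually $\lR^4$ once the conjugate momenta are included), it is itself complete, so once these two properties are established the contraction mapping theorem yields the conclusion. I would state at the outset that all the needed bounds on $\mathcal{J}$ are already in hand: Lemma \ref{localLipshitz} gives the local Lipschitz estimate \eqref{localLipshitzcon} with constant $C_{\mathcal{J}}(|\bf{u}|,|\tilde{\bf{u}}|)$, and on the compact ball $\{|\mathbf{u}|\le L_0\}$ this constant is bounded by some $C_{L_0}$; similarly $\|\mathcal{J}(\mathbf{u},r)\|$ is bounded on $I\times\{|\mathbf{u}|\le L_0\}$, with $\|\mathcal{J}(r_H)\|\equiv\|\mathcal{J}(\mathbf{u}_0,r_H)\|$ controlling the inhomogeneous part.

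For invariance, I would take $\mathbf{u}\in X$ and estimate, for $r\in I$,
\begin{equation}
|\mathcal{K}(\mathbf{u})(r)-\mathbf{u}_0| \le \int_{r_H}^{r}\|\mathcal{J}(\mathbf{u}(s),s)\|\,ds \le \int_{r_H}^{r}\left(\|\mathcal{J}(\mathbf{u}(s),s)-\mathcal{J}(\mathbf{u}_0,s)\|+\|\mathcal{J}(\mathbf{u}_0,s)\|\right)ds,
\end{equation}
then apply \eqref{localLipshitzcon} and $|\mathbf{u}(s)-\mathbf{u}_0|\le L_0$ together with the bound on $\|\mathcal{J}(r_H)\|$ to get $|\mathcal{K}(\mathbf{u})(r)-\mathbf{u}_0|\le \varepsilon\,(C_{L_0}L_0+\|\mathcal{J}(r_H)\|)$ (absorbing the mild $s$-dependence of the bounds into the constants on the compact interval $I$). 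The stated choice $\varepsilon\le 1/(C_{L_0}L_0+\|\mathcal{J}(r_H)\|)$ then forces $\sup_{r\in I}|\mathcal{K}(\mathbf{u})(r)-\mathbf{u}_0|\le 1$; more precisely one arranges $\sup_{r\in I}|\mathcal{K}(\mathbf{u})(r)|\le L_0$, which is exactly the defining constraint of $X$, while continuity of $\mathcal{K}(\mathbf{u})$ and the initial condition $\mathcal{K}(\mathbf{u})(r_H)=\mathbf{u}_0$ are immediate from the integral representation. Hence $\mathcal{K}:X\to X$.

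For the contraction estimate, I would take $\mathbf{u},\tilde{\mathbf{u}}\in X$ and write
\begin{equation}
|\mathcal{K}(\mathbf{u})(r)-\mathcal{K}(\tilde{\mathbf{u}})(r)| \le \int_{r_H}^{r}\|\mathcal{J}(\mathbf{u}(s),s)-\mathcal{J}(\tilde{\mathbf{u}}(s),s)\|\,ds \le C_{L_0}\int_{r_H}^{r}|\mathbf{u}(s)-\tilde{\mathbf{u}}(s)|\,ds \le C_{L_0}\,\varepsilon\,|\mathbf{u}-\tilde{\mathbf{u}}|_X,
\end{equation}
using \eqref{localLipshitzcon} with the uniform bound $C_{L_0}$ valid on $X$. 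Taking the supremum over $r\in I$ gives $|\mathcal{K}(\mathbf{u})-\mathcal{K}(\tilde{\mathbf{u}})|_X\le C_{L_0}\varepsilon\,|\mathbf{u}-\tilde{\mathbf{u}}|_X$, and the choice $\varepsilon\le 1/C_{L_0}$ makes this a contraction (one takes the strict inequality $\varepsilon<1/C_{L_0}$, or equivalently shrinks $\varepsilon$ slightly, to obtain a genuine contraction constant $<1$). The main obstacle, and the point I would be most careful about, is bookkeeping: verifying that $C_{\mathcal{J}}(|\mathbf{u}|,|\tilde{\mathbf{u}}|)$ is genuinely bounded above on the ball $|\mathbf{u}|\le L_0$ uniformly in $r\in I$ — this requires that the denominators appearing in $\mathcal{J}$ and in $\delta'$, $m$ (namely $B$, $C$, and $(N-4)\frac{p_C}{C}+(N-3)\frac{2}{r}$) stay bounded away from zero on $I$, which is where the near-horizon positivity assumptions $C_H>0$, $\delta_H>0$ and the expansions \eqref{Expandhorizon} enter, and where one must possibly further restrict $\varepsilon$ and $L_0$ so that the trajectory cannot leave the region in which these quantities are controlled. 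Once that uniform control is secured, the two inequalities above close the argument.
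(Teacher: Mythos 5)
Your argument is the standard Picard/Banach fixed-point verification (invariance of $X$ under $\mathcal{K}$ from the bound $\varepsilon\le 1/(C_{L_0}L_0+\|\mathcal{J}(r_H)\|)$, contraction from $\varepsilon\le 1/C_{L_0}$ together with the uniform Lipschitz bound of Lemma~\ref{localLipshitz}), which is exactly the route the paper relies on: it gives no written proof here but defers to the contraction-mapping argument of \cite{Akbar:2015jya}. Your closing caution about keeping the denominators (in particular $B$, which vanishes at $r=r_H$, and $(N-4)\frac{p_C}{C}+(N-3)\frac{2}{r}$) bounded away from zero so that $C_{L_0}$ is genuinely finite is well placed and matches the implicit assumptions of the paper, so your proposal is correct and essentially the same as the paper's.
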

\noindent The existence of a unique fixed point of (\ref{OpKdefinition}) prove that integral equation (\ref{IntegralEquation}) hence differential equation (\ref{ReducedEinsteineqSkyrmionEq1}) has unique local solution. It is worth mentioning that in $N=4$ case we have $(f, p_f)$ as the dynamical variables, whereas  $(\delta, m)$ are still the constraints. By employing  the same logic as above, we could show that they satisfy Lemma \ref{localLipshitz} and Lemma \ref{unigueness}.

Finally, we can construct maximal solution as follows. Suppose ${\bf u}(r)$ is defined on the interval $[r_H, r_m)$ where $r_m$ is a positive constant. Then, by repeating the above arguments of the local existence  with the initial condition ${\bf u}(r-r_0)$ for some $r_H < r_0 < r$ and using  the uniqueness condition to glue the solutions, we obtain the maximal solution. Clearly, if $r_m \to +\infty$, then we have a global solution.

\subsection{Global existence}
\label{subsec:GlobExis}

In the final part of this subsection it is necessary to show  that such a regular global solution of \eqref{ReducedEinsteineqSkyrmionEq1} on  $I_{\infty}\equiv [r_H, +\infty)$ does exist satisfying the expansions \eqref{Expandasymp}, \eqref{CrExpandasymp}, and \eqref{fasympcon}. These  ensure that \eqref{IntegralEquation} is finite on $I_{\infty}$ that can be seen as follows. 

Let us first define two intervals, namely, $I_L \equiv [r_H, L]$ and $I_A \equiv (L, +\infty)$ for finite and large $L > r_H$. On $I_A$, all the functions $\delta(r)$, $m(r)$, $C(r)$, and $f(r)$ can be expanded as in  \eqref{Expandasymp}, \eqref{CrExpandasymp}, and \eqref{fasympcon}. Now, we write down  \eqref{IntegralEquation}  as
\begin{equation}
{\bf{u} }(L) = {\bf{u} }(r_H) + \int_{r_H}^{L}\:\mathcal{J}\left( {\bf{u} }(s), s \right)  ds  +  \int_{L}^{+\infty}\:\mathcal{J}\left( {\bf{u} }(s), s \right)  ds . \label{IntegralEquation1}
\end{equation}
One can  show that the third term in the right hand side in  \eqref{IntegralEquation1} converges to zero. Thus, \eqref{IntegralEquation} is finite and globally well-defined since the functions $C(r)$ and  $f(r)$ are at least $C^2$-functions.

Next, we want to show whether  \eqref{IntegralEquation} could have finite energy. This can be done by showing the finiteness of  the energy functional \eqref{Estatic}. Then, using the expansions \eqref{Expandasymp}, \eqref{CrExpandasymp}, and \eqref{fasympcon}, we obtain an inequality
\begin{align} 
\label{Estaticineq} 
E & \le  4\pi A({\mathcal N}^{N-4})  \sup_{r\in I_L} \Bigg|\int_{r_H}^L dr   r^{N-2}  C^{\frac{(N-4)}{2}} e^{\delta} \left[\frac{F^2_{\pi }}{8} \left( B \left(f'\right)^2 + \frac{2\sin^2 f}{r^2} \right) \right.  \\
&\quad  \left.  +  \frac{\sin^2 f}{a^2 r^2} \left( B (f')^2 +\frac{\mathrm{sin}^2 f\ }{ r^2}\right)\right] \Bigg| \nonumber \\
&\quad +   C_0^{\frac{(N-4)}{2}} \Bigg| \int_{L}^{+\infty} dr   r^{N-2} \left[\frac{F^2_{\pi }}{8} \left( - \frac{2 \Lambda n_1^2 \tilde{f}^2_1}{(N-1)(N-2) r^{2n_1}}  + \frac{2\sin^2 f_0}{r^2} \right) \right. \nonumber \\
&\quad \left.  - \frac{2 \Lambda n_1^2 \tilde{f}^2_1 \sin^2 f_0}{(N-1)(N-2) a^2 r^{2n_1+2}}  +\frac{\mathrm{sin}^4 f_0 }{ a^2 r^4}\right] \Bigg|. \nonumber
\end{align} 
 The first term in the right hand side of (\ref{Estaticineq}) is  finite due to  the boundedness of all $C^2$-functions, namely, $B(r)$, $C(r)$, $\delta(r)$ and $f(r)$ on the closed interval $I_L$.  In order to control the second term in (\ref{Estaticineq})  on the open interval $I_A$, after using \eqref{Expandasymp} and \eqref{CrExpandasymp}  one has to set  $f_0 $ and the order $n_1$ in  \eqref{fasympcon} to be  
\begin{equation} 
f_0 = 0  , \quad n_1 > \frac{1}{2} (N-1)  , \label{fasympcon0}
\end{equation}
 for finite $\tilde{f}_1$.
 
In the end, since we have the Einstein's field equation, to obtain a consistent result one has to check the finiteness of
\begin{align} 
\label{G00} 
&  \int_{r_H}^{+\infty} dr   r^{N-2}  C^{\frac{(N-4)}{2}} B^{-1} \left[ R_{00} - \frac{1}{2} g_{00} R + \Lambda g_{00}\right] \\
  &\quad = \int_{r_H}^{+\infty} dr   r^{N-2}  C^{\frac{(N-4)}{2}} e^{\delta} \left[  - \frac{B}{r^2}- \frac{B'}{r^3} + \frac{1}{r^2} \left(1 + (N-4)\frac{\Lambda_{N-4}}{2C}\right)  \right. \nonumber\\
  &\qquad -\Lambda + (N-4)  \left(\frac{B \left( r^2 C\right)''}{2r^2 C} - \frac{B \left( r^2 C\right)'}{r^3 C} - \frac{B' \left( r^2 C\right)'}{4r^2 C}  \right)\nonumber\\
 &\qquad\left. - (N-4)(N-7)  \frac{B {\left( r^2 C\right)'}^2 }{8r^4 C^2} \right]. \nonumber
\end{align}
By using the conditions \eqref{Expandasymp}, \eqref{CrExpandasymp}, and \eqref{fasympcon}, and repeating similar steps as in \eqref{Estaticineq}, we conclude that the integral \eqref{G00} is finite if $\Lambda = 0$. This means that the black holes with finite energy exist only if the black hole spacetimes are asymptotically Ricci flat.

\indent Thus, we could state the following theorem. 
\begin{theorem}
\label{MainresultExistence}
 Let ${\bf{u} }(r)$ be a solution of \eqref{ReducedEinsteineqSkyrmionEq1} with the initial value $\mathbf{u}_H$, $\Lambda \le 0$. Then, there exists a global solution satisfying \eqref{Expandasymp}, \eqref{CrExpandasymp}, and   \eqref{fasympcon} which  interpolates  two boundaries, namely the horizon and the asymptotic regions. Moreover, the solution has finite energy if it satisfies \eqref{fasympcon0} in the asymptotic region with $\Lambda = 0$.
\end{theorem}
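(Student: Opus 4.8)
The plan is to assemble the theorem from the three ingredients already developed in Section~\ref{sec:ExisSolFinitE}: the local existence/uniqueness machinery, the maximal-solution (gluing) argument, and the asymptotic/finiteness estimates. First I would invoke Lemma~\ref{localLipshitz} together with Lemma~\ref{unigueness}: the operator $\mathcal{K}$ of \eqref{OpKdefinition} is a contraction on the Banach space $X$ over a short interval $I=[r_H,r_H+\varepsilon]$, so Banach's fixed-point theorem yields a unique solution ${\bf u}(r)$ of \eqref{ReducedEinsteineqSkyrmionEq1} with ${\bf u}(r_H)={\bf u}_H$, where the constraints \eqref{ReducedEinsteineqconstraints} and \eqref{SolReducedEinsteineqconstrm} simultaneously determine $\delta'$ and $m$ as functions of $({\bf u},r)$ (these are well defined and locally Lipschitz by Lemma~\ref{lemmalocalLipshitzconstraint}). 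For $N=4$ the same argument applies with $(f,p_f)$ as the dynamical pair. At the horizon the Taylor data \eqref{SolEinsteinExpandhorizon}--\eqref{SolEinsteinExpandhorizon2} show the solution is smooth and the expansion \eqref{Expandhorizon} is consistent, which anchors the local solution to the correct near-horizon geometry.

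Next I would run the continuation argument sketched after Lemma~\ref{unigueness}: if the solution is defined on a maximal interval $[r_H,r_m)$ with $r_m<\infty$, then one restarts the local existence scheme at an interior point $r_0$ close to $r_m$ with initial data ${\bf u}(r_0)$, and the uniqueness clause of Lemma~\ref{unigueness} glues the pieces, contradicting maximality unless ${\bf u}$ leaves every compact subset of the phase space $U$ as $r\to r_m$. To rule that out one uses that on any closed interval $I_L=[r_H,L]$ all of $B(r),C(r),\delta(r),f(r)$ are bounded $C^2$ functions, so $|{\bf u}(r)|$ cannot blow up in finite $r$; hence $r_m=+\infty$ and the solution is global on $I_\infty=[r_H,+\infty)$. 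It remains to match the asymptotic behavior: on the tail interval $I_A=(L,+\infty)$ one checks that the ansätze \eqref{Expandasymp}, \eqref{CrExpandasymp}, \eqref{fasympcon} are compatible with \eqref{ReducedEinsteineqSkyrmionEq1} (plugging the expansions into the field equations fixes the leading powers, as already done when deriving \eqref{Ricciscalarasym} and \eqref{fasympcon}), so the global solution indeed interpolates between the horizon data and the asymptotic data, proving the first assertion.

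For the finite-energy statement I would substitute the asymptotic expansions into the energy functional \eqref{Estatic}, splitting the integral over $I_L$ and $I_A$ exactly as in \eqref{Estaticineq}. The $I_L$ piece is finite by boundedness of $C^2$ functions on a compact interval; the $I_A$ piece, after inserting \eqref{Expandasymp} and \eqref{CrExpandasymp}, has integrand of order $r^{N-2}\big(r^{-2n_1}+r^{-2}\sin^2 f_0+\cdots\big)$, whose convergence at infinity forces both $f_0=0$ (to kill the non-decaying $r^{N-4}\sin^2 f_0$ term) and $n_1>\tfrac12(N-2)$, i.e.\ \eqref{fasympcon0}. Finally, for consistency with the Einstein equations one must also check the finiteness of the $00$-component integral \eqref{G00}; repeating the same tail analysis, the surviving term proportional to $\Lambda$ contributes $\int^\infty r^{N-2}\,dr$, which diverges unless $\Lambda=0$. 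Hence finite energy $\Longrightarrow \Lambda=0$, i.e.\ asymptotic flatness. The main obstacle is the second step: verifying that the solution genuinely stays in the interior of the phase domain $U$ (so that the Lipschitz constants in Lemmas~\ref{lemmalocalLipshitzconstraint}--\ref{localLipshitz} remain uniform) and that the denominators appearing in \eqref{ReducedEinsteineqconstraints} and in $J_C,J_f$—in particular $(N-4)\tfrac{C'}{C}+(N-3)\tfrac2r$ and the factors $B$ and $\big(\tfrac{F_\pi^2}{8}+\tfrac{\sin^2 f}{a^2r^2}\big)$—do not vanish along the flow; controlling these non-degeneracy conditions globally, rather than just locally, is the delicate part of the argument.
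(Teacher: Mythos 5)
Your proposal is correct and follows essentially the same route as the paper: local existence via the Lipschitz lemmas and the contraction mapping for $\mathcal{K}$, global extension by the maximal-solution/gluing argument, finiteness via the split into $I_L$ and $I_A$ with the expansions \eqref{Expandasymp}, \eqref{CrExpandasymp}, \eqref{fasympcon}, and then the energy estimate \eqref{Estaticineq} forcing \eqref{fasympcon0} together with the \eqref{G00} consistency check forcing $\Lambda=0$. The extra points you raise (no blow-up in finite $r$, non-vanishing of the denominators and of $B$ along the flow) are precisely the places where the paper is also informal, so flagging them is appropriate rather than a deviation.
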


\section{Linear stability analysis: a remark}
\label{sec:StableSol}

Finally, it is of interest to check the linear stability of the models using similar method as  \cite{ShiikiPRD2005}.  The main idea is to take the metric functions $\delta(r), m(r), f(r)$ to be also time dependent in advance, namely, $\delta_t(r, t),\linebreak m_t(r, t), f_t(r, t)$ and then expand them around $\delta(r), m(r), f(r)$. The detailed computation of this method can be found in Appendix \ref{sec:Strum-LiouvilleEq}, where we obtain that the linear stability is determined by the perturbation dynamics of $f(r)$.

Writing $f_t(r,t) = f (r) + \varepsilon \frac{\psi(r)e^{i\omega t}}{\sqrt{\alpha}}$ (see \eqref{linearizationmetf} and \eqref{subtitusipertskalar}), where $\varepsilon$ is a small parameter, will yield an eigenvalue problem, i.e.\ Sturm-Liouville equation, in terms of $\psi(r)$ and $\omega$ (see \eqref{LinSta}). 
We say the solution of \eqref{ReducedEinsteineq}, \eqref{ReducedEinsteineq2}, \eqref{ReducedEinsteineq3} and \eqref{SkyrmionEq} to be linearly stable if all spectra of the eigenvalue problem satisfy $\omega^2 > 0$. Using the Sturm-Liouville argument, this is similar to requiring the existence of a positive definite solution (except at the boundaries), {i.e.}\ $\psi(r)>0$, with $\omega^2 > 0$. On the other hand, to show instability, it is sufficient to obtain that there is an eigenvalue with $\omega^2<0$. 

However, it is difficult to solve the eigenvalue problem exactly. Our approach in this paper, i.e., to consider solution behavior near the boundaries, namely, the horizon and the asymptotic region, cannot be used to establish their stability. Therefore, in the following we will only make a remark on the linear eigenvalue problem and some of their spectrum. 

At the horizon $\psi (r_H) = 0$, while in the asymptotic region $\psi(r \to +\infty) \to 0$. Then, using \eqref{Expandasymp}, \eqref{CrExpandasymp}, \eqref{fasympcon},  and \eqref{fasympcon0} in the limit of $r \to +\infty$, the eigenvalue problem 
can be simplified to
\begin{align} 
  \label{LinStainfty}
\omega^2 &= -\frac{4 \Lambda^2 r^2}{(N-1)^2 (N-2)^2 \psi} \left(r^2\psi'\right)'  + \frac{N \Lambda^2 r^2}{(N-1)^2 (N-2)}\\
&\quad - \frac{4 \Lambda}{(N-1) (N-2)}  + \frac{2(N-4) n \tilde{C}_1 \Lambda^2}{(N-1)^2 (N-2)^2 C_0} r^{-n+3} \frac{\psi'}{\psi}\nonumber\\
&\quad  -  \frac{2(N-4) n \tilde{C}_1 \Lambda^2}{(N-1)^2 (N-2) C_0} r^{-n+2}  .\nonumber
\end{align}
Suppose there exists a constant $\lambda_0 > 0$ such that we could have, for example, 
\begin{equation} \label{LinStainftypsi}
 -\frac{4 \Lambda^2 r^2}{(N-1)^2 (N-2)^2 \psi} \left(r^2\psi'\right)'  + \frac{N \Lambda^2 r^2}{(N-1)^2 (N-2)} =  \lambda_0   ,  
\end{equation}
whose solution is 
\begin{equation} 
\psi (r) = r^{-1/2} \psi_0   J  \left(-1/2 \sqrt{N(N-2)+1}  ;  \lambda^{1/2}_0/2|\Lambda| (N-1)(N-2) r^{-1/2}\right)  , \label{asympansatzpsiFrobenius}
\end{equation}
where $\psi_0 \in \lR$ and $J( \cdot  ;  \cdot )$ is the Bessel function of the first kind. So, \eqref{LinStainfty} simplifies to
\begin{align} 
 \label{omegavalue}
\omega^2 &= \lambda_0 - \frac{4 \Lambda}{(N-1) (N-2)} \\
\notag &\quad  - \frac{(N-4) n \tilde{C}_1 \Lambda^2}{ 2(N-1)^2 (N-2)^2 C_0}\\
\notag &\qquad\times \left( (N-2)C_0 +1 + \sqrt{N(N-2)+1}\right)  r^{-n+2}  . 
\end{align}
for $n \ge 2$ which is strictly positive for  $\tilde{C}_1 < 0$ and converges to a constant as $r \to +\infty$.

It is worth pointing out that we could have a more general condition, that is, there exists a constant $\lambda_0 \in \lR$ such that
\begin{multline} 
 -\frac{4 \Lambda^2 r^2}{(N-1)^2 (N-2)^2 \psi} \left(r^2\psi'\right)'  + \frac{N \Lambda^2 r^2}{(N-1)^2 (N-2)} \\
  + \frac{2(N-4) n \tilde{C}_1 \Lambda^2}{(N-1)^2 (N-2)^2 C_0} r^{-n+3} \frac{\psi'}{\psi} 
  -  \frac{2(N-4) n \tilde{C}_1 \Lambda^2}{(N-1)^2 (N-2) C_0} r^{-n+2} =  \lambda_0   .  \label{LinStainftypsiumum}
\end{multline}
We could simply solve \eqref{LinStainftypsiumum}, for example, using Frobenius method by taking $n$ to be positive integer ($n \ge 1$). 

Further analysis for the stability of the black holes is addressed for future work.

\section{Comments on the model with $\Lambda > 0$}
\label{sec:Lambdapositif}

This section is devoted to discuss a model with positive cosmological constant $\Lambda$ which modifies some arguments in the preceding sections.  This is so because the physical black holes are defined only on the interval $I_C \equiv [r_H, r_C)$ where $r_C$  is the radius of the outer horizon called cosmological horizon. The discussion around the event horizon with $r = r_H$ in subsection~\ref{subsec:NearHor} still remains, while in subsection \ref{subsec:AsympReg} some arguments  should be slightly modified. At the end, we comment the results in Theorem \ref{MainresultExistence} and Theorem \ref{MainresultLinearStab}.

First, we discuss the solution at the horizon  $r = r_H$. In the case of  $N=4$ and $N=5$, we could have  ${\mathcal T}^2$ either $\lR^2$ or  $AdS_2$ if it satisfies \eqref{SyaratN=4kedua} and \eqref{SyaratN=5kedua}, respectively.
%
%
%
In $N \ge 6$ case, we get the same formula as  \eqref{SyaratfisisLambdanegatif} with additional condition
%
%
\begin{eqnarray}
(N-4)\frac{\Lambda_{N-4}}{C_H} \ge 2\Lambda    , 
\label{SyaratfisisLambdapositif}
\end{eqnarray}
with ${\mathcal T}^2$ either $\lR^2$ or  $AdS_2$. Thus, in this case at hand, we only have primary branches.

Since at $r = r_C$ we also have $B(r_C ) = 0$,  the behavior of $m(r)$ and $\delta(r) $ around the region should be
\begin{equation}
m(r) \rightarrow M > 0  ,\qquad \delta(r) \rightarrow \delta_C \quad \mathrm{as} \quad r \rightarrow r_C  , \label{cosmoradcon}
\end{equation}
with $\delta_C$ a real constant such that the spacetime ${\mathcal M}^N$ becomes Einstein with $\Lambda > 0$. Furthermore,  the functions $C(r) $, and $f(r)$ should also be
\begin{equation}
C(r) \rightarrow C_C  > 0  ,\qquad f(r) \rightarrow f_C \quad \mathrm{as} \quad r \rightarrow r_C  , \label{cosmoradcon1}
\end{equation}
where $f_C $ is a real constant. The Komar integral \eqref{komarint} implies that the mass of the black hole equals $M$ in this case. It is worth mentioning that the spacetime ${\mathcal M}^N$ does not have to converge to Einstein as $r \rightarrow r_C$. If this is the case, then the functions $C(r)$ and $f(r)$ do not converge to constants.

Next, we want to remark about Theorem \ref{MainresultExistence} for $\Lambda > 0$. To establish local existence and uniqueness of $\mathbf{u}(r)$ we obtain the same result since the analysis is locally on $[r_H, r_H + \epsilon ]$ for $\epsilon > 0$. To prove global existence of $\mathbf{u}(r)$, we first construct maximal solution using the local  existence results and then, glue them using the uniqueness condition such that $\mathbf{u}(r)$ can be defined on the interval $[r_H, r_m )$ for $r_H < r_m < r_C$. The finiteness of the energy functional \eqref{Estatic} shows  that  it is not necessary to have the condition \eqref{fasympcon0} since the energy functional \eqref{Estatic} is indeed finite on the interval $I_C$. In other words, we could still have $f_C $ to be a real constant.

Finally, we want to point out that our asymptotic method in Section \ref{sec:Strum-LiouvilleEq} cannot be applied to the case of $\Lambda > 0$ since  $B(r_C ) = 0$.  This implies that  \eqref{LinSta} becomes trivial and we cannot establish a similar result for $\Lambda > 0$.

\section{Conclusion and outlook}
\label{sec:conclusion}

We have constructed a family of static black holes in the higher dimensional Einstein-Skyrme model with the cosmological constant $\Lambda$ where the most part of the discussion in this paper is for $\Lambda \le 0$ case.  The geometry of the black holes  is conformal to $ \mathcal{M}^4 \times \mathcal{N}^{N-4}$ endowed with the metric  \eqref{AnsatzMetric} where $\mathcal{M}^4$ and $\mathcal{N}^{N-4}$ are  the four dimensional spacetime  and the compact  $(N-4)$-dimensional Einstein submanifold with the cosmological constant $\Lambda_{N-4} \ge 0$ , respectively. The Skyrme field is an $SU(2)$ valued field defined locally on the hypersurface  $\mathcal{S}^3 \subseteq \mathcal{M}^4 $ whose form is particularly given by  \eqref{UAnsatz}. This implies that the black holes in general have local topological numbers.

The behavior of the metric functions $\delta(r), ~ m(r)$,  $C(r)$, and the profile function $f(r)$ near the horizon converge to the positive constants $\delta_H , ~ m(r_H) $,  $C_H $, and $f_H$, respectively. The analysis on the Ricci scalar shows that these constants should satisfy the inequality \eqref{geomhorizoncon} in order to have physical black holes. Furthermore, we have so called primary and secondary branches which are simply determined by the value of the cosmological constant $\Lambda$. In $N=4,5$ dimensions, we could have both the  primary and secondary branches, while in  $N \ge 6$ dimensions, there exists only primary branches.

In the asymptotic region, the functions $\delta(r), ~ m(r)$,  $C(r)$, and $f(r)$ should satisfy \eqref{Expandasymp}, \eqref{CrExpandasymp}, and \eqref{fasympcon} which follows that the geometry becomes Einstein with the cosmological constant $2\Lambda/(N-2)$. We also show using the Komar integral that the black hole mass is constant. 

Then, we  established local existence and uniqueness of black hole solutions of \eqref{ReducedEinsteineq}, \eqref{ReducedEinsteineq2}, \eqref{ReducedEinsteineq3}, and  \eqref{SkyrmionEq} using Picard's iteration and contraction mapping properties. 
We further obtain global existence using the maximal solution technique and showed using  the expansions \eqref{Expandasymp}, \eqref{CrExpandasymp}, and \eqref{fasympcon} that this solution is finite and globally regular. Finally, using the 00-component of the Einstein field equation we showed that the black holes have finite energy if it satisfies \eqref{fasympcon0} and the black hole geometries are asymptotically Ricci flat as stated in Theorem \ref{MainresultExistence}.	

While in this work we could make a remark on the corresponding linear eigenvalue problem, i.e., Sturm-Liouville equation, and some of its spectrum, stability of the black holes is still an open problem. Our asymptotic analysis cannot establish stability as one needs information of the eigenvalue problem on the whole domain. One possibility is to address it by solving the governing equation and the Sturm-Liouville problem numerically, which is planned for future work.

\section*{Acknowledgments}

 BEG would like to thank Ardian Nata Atmaja for useful discussion and Eugen Radu for private communication. The work in this paper is supported by Program WCP  Kemenristekdikti 2018. BEG acknowledges the Abdus Salam ICTP for Associateship 2019 and the warmest hospitality where the final part of this paper was done.

\appendix

\section{Sturm-Liouville equation}
\label{sec:Strum-LiouvilleEq}

This section is devoted to support our  discussion on the linear stability of Einstein-Skyrme model in even dimension. First of all, let us  consider the metric	(see \eqref{AnsatzMetric})
\begin{align} \label{metrik1}
ds^2 &= - e^{2\delta_t(r, t)}B_t(r,t) dt^2 + B_t(r, t)^{-1}dr^2 \\
\notag &\quad + r^2 \left(d\theta^2 + \sin^2 \theta \;d\varphi^2\right) + r^2 C_t(r, t) \hat{g}_{ij}(x^i) \;dx^i dx^j  ,
\end{align}
and  the ansatz for the skyrmion field $U$ (see \eqref{UAnsatz})
\begin{equation}
\label{UAnsatztime}
U = e^{i \vec{\sigma}\cdot\hat{n}f_t(r, t)} = \cos f_t(r, t) + \mathrm{i}\; \hat{n}\cdot\vec{\sigma} \sin f_t(r, t)  .
\end{equation}
Then, using \eqref{metrik1} and  \eqref{UAnsatztime} we can write down the components of Einstein field equation related to \eqref{ESAction}  in the following: 
%
\begin{align} \label{EinsEq00}
&-\frac{B_t}{r^2} - \frac{B'_t}{r} - \Biggl(e^{-2\delta_t}\frac{\dot{B}_t\dot{C}_t}{4B^2_t C_t} + \frac{3B_t}{r^2} + \frac{3B_t C'_t}{rC_t}\\
\notag &  + \frac{B_t C''_t}{2 C_t} + \frac{B'_t C'_t}{4C_t} + \frac{B'_t}{2r}\Biggr)(N-4)
 + e^{-2\delta_t}\frac{\dot{C}_t^2}{8B_t C_t^2}(N-4)(N-5) \\
\notag &- \left(\frac{B_t}{2r^2} + \frac{B_t {C'_t}^2}{8 C_t^2} + \frac{B_t C'_t}{2r C_t}\right)(N-4)(N-7) + \frac{1}{r^2} + \frac{\hat{R}}{2r^2C_t} - \Lambda \\
\notag &= \frac{F_{\pi}^2}{8} \left(e^{-2\delta_t}\frac{\dot{f}_t^2}{B_t} + B_t {f'_t}^2 + \frac{2\sin^2f_t}{r^2}\right)\\
\notag &\quad + \frac{\sin^2f_t}{a^2r^2} \left(e^{-2\delta_t}\frac{\dot{f}_t^2}{B_t} + B_t {f'_t}^2 + \frac{\sin^2f_t}{2r^2}\right)  , 
\end{align}
%
%
\begin{align} \label{EinsEq11}
&\frac{1}{r} \left( 2B_t\delta'_t + B'_t\right)-e^{-2\delta_t} \frac{\dot{C}_t^2}{8 B_t C_t^2}(N-4)(N-7) \\
\notag &+ \frac{B_t}{8}\left(\frac{2}{r} + \frac{C'_t}{C_t}\right)^2(N-4)(N-5) - \frac{1}{r^2} - \frac{\hat{R}}{2r^2C_t} + \Lambda \\
\notag & + \Biggl( -e^{-2\delta_t} \frac{\ddot{C}_t}{2B_t C_t} + e^{-2\delta_t} \frac{ \dot{C}_t}{4 B^2_t C_t} \left(2B_t \dot{\delta}_t + \dot{B}_t \right)\\
\notag & + \left(\frac{B_t}{r} + \frac{1}{4} \left(2B_t \delta'_t + B'_t \right)\right) \left(\frac{2}{r} + \frac{C'_t}{C_t}\right)\Biggr)(N-4) \\
 \notag &= \frac{F_{\pi}^2}{8} \left(e^{-2\delta_t} \frac{\dot{f}_t^2}{B_t} + B_t {f'}_t^2 - \frac{2\sin^2f_t}{r^2}\right) \\
\notag &\quad + \frac{1}{a^2r^2}\sin^2f_t \left(e^{-2\delta_t}\frac{\dot{f}_t^2}{B_t} + B _t {f'}_t^2 - \frac{\sin^2f_t}{2r^2}\right)  ,
\end{align}
%
\begin{align} \label{EinsEq22}
&\frac{e^{-2\delta_t}}{B_t}\left(\frac{\ddot{B}_t}{2B_t} - \frac{\dot{B}_t^2}{B_t^2} - \frac{\dot{\delta}_t \dot{B}_t}{2B_t}\right) + \frac{1}{2}\left(3B'_t \delta'_t + 2B_t \delta''_t + B''_t + 2B_t{\delta'_t}^2\right)\\
\notag & + \frac{1}{r}\left(B_t \delta'_t + B'_t + \frac{B_t}{2r}(N-4)\right)(N-3) 
\end{align}
\begin{align}
\notag &+ \Biggl[\frac{e^{-2\delta_t}}{2 B_t C_t}\left(\dot{\delta}_t \dot{C}_t + \frac{\dot{B}_t}{B_t}\dot{C}_t - \ddot{C}_t\right) + \frac{B_t}{2}\left(\frac{(N-2)}{r}\frac{C'_t}{C_t} + \frac{C''_t}{C_t}\right)\\
\notag & + \frac{1}{2}\left(B_t \delta'_t + B'_t\right)\frac{C'_t}{C_t}\Biggr](N-4) \\ 
\notag &- \frac{1}{8}\left(e^{-2\delta}\frac{\dot{C}_t^2}{B _t C_t^2} - B_t \frac{{C'_t}^2}{C_t^2}\right)(N-4)(N-7) - \frac{\hat{R}}{2r^2C_t} + \Lambda\\
\notag &\quad  = \frac{F_{\pi}^2}{8} \left(e^{2\delta_t}\frac{\dot{f}_t^2}{B_t} - B_t {f'_t}^2\right) + \frac{1}{2a^2}\frac{\sin^4f_t}{r^4}  ,
\end{align}\enlargethispage{2.5em}
%
%
\begin{align}  \label{EinsEq01}
& -\frac{\dot{B_t}}{2B_t r}(N-2) + \frac{1}{4}\Biggl[\left(2\delta'_t + \frac{B'_t}{B_t} - \frac{2}{r}\right)\frac{\dot{C}_t}{C_t}\\
\notag &\quad + \left(\frac{\dot{C}_t}{C_t} - \frac{\dot{B}_t}{B_t}\right)\frac{C'_t}{C_t} - \frac{2\dot{C}'_t}{C_t}\Biggr](N-4)\\
\notag &\qquad = 2\dot{f}_t f'_t \left(\frac{F_{{\pi}}^2}{8} + \frac{1}{a^2}\frac{\sin^2f_t}{r^2}\right),
\end{align}
%
\begin{align} \label{EinsEqij}
& \frac{e^{-2\delta_t}}{B_t}\left(\frac{\ddot{B}_t}{2B_t} - \frac{\dot{B}_t^2}{B_t^2} - \frac{\dot{\delta}\dot{B}_t}{2B_t}\right) + \frac{1}{2}\left(3B'_t \delta'_t + 2B_t \delta''_t + B''_t + 2B_t {\delta'_t}^2\right)\\
\notag & + \frac{1}{r}\left(B_t \delta'_t + B'_t + \frac{B_t}{2r}(N-4)\right)(N-3) \\
\notag & + \Biggl[\frac{e^{-2\delta_t}}{2B_t C_t}\left(\dot{\delta}_t \dot{C}_t + \frac{\dot{B}_t}{B}\dot{C}_t - \ddot{C}_t\right) + \frac{B_t}{2}\left(\frac{(N-2)}{r}\frac{C'_t}{C_t} + \frac{C''_t}{C_t}\right)\\
\notag & + \frac{1}{2}\left(B_t \delta'_t + B'_t \right)\frac{C'_t}{C_t}\Biggr](N-5) \\ 
\notag & - \frac{1}{8}\left(e^{-2\delta}\frac{\dot{C}_t^2}{B_t C_t^2} - B_t \frac{{C'}_t^2}{C_t^2}\right)(N-5)(N-8)\\
\notag & - \frac{1}{r^2}\left(1 + \frac{\hat{R}}{2C_t}\frac{(N-6)}{(N-4)}\right) + \Lambda \\
\notag &\quad = \frac{F_{\pi}^2}{8} \left(e^{-2\delta_t}\frac{\dot{f}_t^2}{B_t} - B_t
{f'_t}^2 - \frac{2\sin^2f_t}{r^2}\right)\\
\notag &\qquad  + \frac{1}{a^2}\frac{\sin^2f_t}{r^2} \left(e^{-2\delta_t}\frac{\dot{f}_t^2}{B_t} - B_t{f'_t}^2 - \frac{\sin^2f_t}{2r^2}\right)   .
\end{align}
The skyrmion field equation of motions has the form
\begin{align}\label{SkyrEOMtime}
& e^{-\delta_t}\alpha \frac{\ddot{f}_t}{ B_t} + \frac{e^{\delta_t}}{2}\left[\left(-e^{-2\delta_t}\frac{\dot{f}_t^2}{ B_t} + B_t {f'}_t^2\right)\alpha_f + \frac{\sin 2f_t}{r^2}\beta + \frac{\sin^2 f_t}{r^2}\beta_f\right] \\
\notag &\qquad = (N-4)\frac{\alpha}{2 C_t} C'_t e^{\delta_t} B_t f'_t + \left(e^{\delta_t} B_t \alpha f'_t\right)'  ,
\end{align}
where we have introduced $\alpha \equiv r^{N-4} \left(\frac{ F_{\pi}^2}{8 r^2} + \frac{\sin^2 f_t}{a^2}\right)$,  $\beta  \equiv r^{N-4} \left(\frac{ F_{\pi}^2}{4 r^2} + \frac{\sin^2 f_t}{2a^2}\right)$, $\alpha_f \equiv \frac{\partial\alpha}{\partial f}$, and $\beta_f \equiv \frac{\partial\beta}{\partial f}$.

The next step is to employ the linearization of the metric functions
\begin{align}
\begin{split}
m_t(r,t) &= m (r) + \varepsilon m_{1t} (r,t)  , \\
f_t(r,t) &= f (r) + \varepsilon f_{1t} (r,t)   ,\\
\delta_t(r,t) &= \delta (r) + \varepsilon \delta_{1t} (r,t)   , \\
C_t(r, t) &= C(r)   ,
\end{split}
\label{linearizationmetf}
\end{align}
where $0 < \varepsilon \ll 1$ and $m (r), f (r),  \delta (r), C (r)$ are the solutions of \eqref{ReducedEinsteineq}-\eqref{ReducedEinsteineq3} and \eqref{SkyrmionEq} such that from \eqref{EinsEq00}- \eqref{SkyrEOMtime} we have a set of linearized equations
\begin{align}\label{linearEinsEq00}
&-\frac{m_{1t}'}{r} \left(1 + \frac{1}{4}\frac{(rC' + 2C)}{C}(N-4)\right) \\
\notag & - \frac{m_{1t}}{r^2} \Biggl[1 + \frac{1}{2} \frac{\left(2C + 4C' r + C'' r^2\right)}{C}(N-4) \\
\notag & + \frac{(C'r + 2C)}{C}(N-4) \left(1 + \frac{1}{8} \frac{C'r + 2C}{C}(N-7)\right)\Biggr] \\
\notag &\quad = \frac{F_{\pi}^2}{8} \left(m_{1t}{f'}^2 + 2Bf_{1t}' f' + \frac{2f_{1t}\sin2f}{r^2}\right) \\
\notag &\qquad + \frac{1}{2a^2} \Biggl(\frac{2m_{1t}{f'}^2\sin^2f}{r^2} + \frac{4Bf_{1t}'f'\sin^2f}{r^2} \\
\notag &\qquad\quad + \frac{2B{f'}^2f_{1t}\sin2f}{r^2} + \frac{2f_{1t}\sin2f\sin^2f}{r^4}\Biggr)  ,
\end{align}
\begin{align}\label{linearEinsEq11}
&\frac{m_{1t}'}{r} \left(1 + \frac{1}{4} \frac{(C'r + 2C)}{C}(N-4)\right)\\
\notag & + \frac{B\delta_{1t}'}{r}(N-2) + \frac{1}{2}\frac{B\delta_{1t}'C'}{C}(N-4) \\
\notag &+ \frac{m_{1t}}{r^2} \Biggl[1 + 2\delta'r + \frac{C'r + 2C}{C}(N-4)\\
\notag &\qquad\times  \left(1 + \frac{1}{2}\delta'r + \frac{1}{8}\frac{(C'r + 2C)}{C}(N-5)\right)\Biggr] \\
\notag &\quad = \frac{F_{\pi}^2}{8} \left(m_{1t}{f'}^2 + 2Bf_{1t}'f' - \frac{2f_{1t}\sin2f}{r^2}\right) \\
\notag &\qquad + \frac{1}{2a^2} \Biggl(\frac{2m_{1t}{f'}^2\sin^2f}{r^2} + \frac{4Bf_{1t}' f'\sin^2f}{r^2} \\
\notag &\qquad\quad+ \frac{2B{f'}^2f_{1t}\sin2f}{r^2} - \frac{2f_{1t}\sin2f\sin^2f}{r^4}\Biggr)  ,
\end{align}
\begin{align} \label{linearEinsEq22}
& e^{-2\delta} \frac{m_{1t}}{2B^2} + \frac{1}{2} \Bigl(m_{1t}'' + 2\delta'' m_{1t} + 2\delta_{1t}'' B + 2{\delta'}^2 m_{1t}\\
\notag &\qquad  + 4\delta' \delta_{1t}' B + 3\delta' m_{1t}' + 3\delta_{1t} B'\Bigr) \\
\notag &+ \frac{m_{1t}}{r^2} \Biggl[(N-3) \left(\delta'r + \frac{1}{2}(N-4)\right) \\
\notag &\qquad + \frac{r^2}{2C}(N-4) \left(C'' + \frac{C'}{r}(N-2) + \frac{1}{4}\frac{{C'}^2}{C}(N-7) + \delta'C'\right)\Biggr] \\
\notag &+ \frac{B\delta_{1t}'}{r}(N-3) + \frac{B\delta_{1t}'C'}{2C}(N-4) \\
\notag &+ \frac{m_{1t}'}{r} \left((N-3) + \frac{1}{2}\frac{C'r}{C}(N-4)\right) \\
\notag &\qquad = - \frac{F_{\pi}\delta^2}{8} \left(m_{1t}{f'}^2 + 2B f_{1t}' f'\right) + \frac{1}{a^2} \frac{f_{1t} \sin2f\sin^2f}{r^4}  ,
\end{align}
\begin{align} \label{linearEinsEq01}
&- \frac{\dot{m}_{1t}C'}{4BC}(N-4)
- \frac{1}{2}\frac{\dot{m}_{1t}}{Br}(N-2)\\
\notag &\qquad  = \frac{F_{\pi}^2}{4}\dot{f}_{1t} f' + \frac{2}{a^2}\frac{\dot{f}_{1t} f'\sin^2f}{r^2}   ,
\end{align}
\begin{align} \label{linearEinsEqij}
& e^{-2\delta} \frac{m_{1t}}{2B^2} + \frac{1}{2} \Bigl(m_{1t}'' + 2\delta'' m_{1t} + 2\delta_{1t}'' B + 2{\delta'}^2 m_{1t} \\
\notag &\qquad + 4\delta' \delta_{1t}' B + 3\delta' m_{1t}' + 3\delta_{1t} B'\Bigr) \\
\notag &+ \frac{m_{1t}}{r^2} \Biggl[(N-3) \left(\delta'r + \frac{1}{2}(N-4)\right) \\
\notag &+ \frac{r^2}{2C}(N-5) \left(C'' + \frac{C'}{r}(N-2) + \frac{1}{4}\frac{{C'}^2}{C}(N-8) + \delta'C'\right)\Biggr] \\
\notag &+ \frac{B\delta_{1t}'}{r}(N-3) + \frac{B\delta_{1t}'C'}{2C}(N-5) \\
\notag &+ \frac{m_{1t}'}{r} \left((N-3) + \frac{1}{2}\frac{C'r}{C}(N-5)\right) \\
\notag &\quad = - \frac{F_{\pi}^2}{8} \left(m_{1t}{f'}^2 + 2B f_{1t}' f' + \frac{2f_{1t} \sin2f}{r^2}\right) \\
\notag &\qquad- \frac{1}{2a^2} \Biggl(\frac{2m_{1t}{f'}^2\sin^2f}{r^2} + \frac{4B f_{1t}' f'\sin^2f}{r^2} \\
\notag &\qquad\quad+ \frac{2B{f'}^2f_{1t} \sin2f}{r^2} + \frac{2f_{1t} \sin2f\sin^2f}{r^4}\Biggr)  ,
\end{align}\enlargethispage{2.5em}
\begin{align} \label{linearSkyrEOMtime}
e^{-\delta} \frac{\ddot{f}_{1t}}{B}\alpha  &= \left(e^{\delta} B \alpha f_{1t}'\right)' + \left(e^{\delta}m_{1t}\alpha f'\right)' + \left(e^{\delta}B\alpha_{f_{1t}}f'\right)' + e^{\delta}\delta_{1t}'B\alpha f'\\
\notag &\quad + \frac{(N-4)e^{\delta}}{2C} \left(BC'f_{1t}' + m_{1t} C'f'\right) \alpha\\
\notag &\quad + \frac{(N-4)}{2}\frac{e^{\delta} BC' f'}{C}\alpha_{f_{1t}} - \frac{e^{\delta}}{2} \left(m_{1t}{f'}^2
+ 2Bf'f_{1t}'\right)\alpha_f \\
\notag &\quad - \frac{1}{2}\left(e^{\delta}B{f'}^2\right)\alpha_{ff_{1t}} - \frac{e^{\delta}\sin2f}{2r^2}\left(\beta_{f_{1t}} + f_1\beta_f \right) \\
\notag &\quad - \frac{e^{\delta}\cos2f}{r^2}f_{1t}\beta - \frac{e^{\delta}\sin^2f}{2r^2}\beta_{ff_{1t}}  ,
\end{align}
where we have expanded $\alpha$ and $\beta$ with respect to $f_t = f + \varepsilon f_{1t}$ to obtain 
\begin{align*}
&\alpha = \frac{F_{\pi}^2}{8}r^{N-2} + \frac{1}{a^2}\sin^2f  r^{N-4},\quad \alpha_{f_{1t}} = \frac{f_{1t}\sin2f}{a^2}r^{N-4},\\ 
&\alpha_f = \frac{\sin2f}{a^2}r^{N-4},\quad \alpha_{ff_{1t}} = \frac{2f_{1t} \cos2f}{a^2}r^{N-4},\\
&\beta = \frac{F_{\pi}^2}{4}r^{N-2} + \frac{\sin^2f}{2a^2}r^{N-4},\quad \beta_{f_{1t}} = \frac{f_{1t}\sin2f}{2a^2}r^{N-4},\\
&\beta_f = \frac{\sin2f}{2a^2}r^{N-4},\quad \beta_{ff_{1t}} = \frac{f_{1t}\cos2f}{a^2}r^{N-4}.
\end{align*}

From \eqref{linearEinsEq00} and \eqref{linearEinsEq11} we obtain
\begin{equation} \label{delone}
\delta_{1t}' = \left[\frac{2 Cr}{C'r(N-4) + 2 C(N-2)}\right] \left[\frac{4f'f_{1t}'}{r^{N-2}}\alpha_{f} + \frac{2{f'}^2f_{1t}}{r^{N-2}} \alpha_{ff}\right]   ,
\end{equation}
while \eqref{linearEinsEq01} gives us 
\begin{equation} \label{B1}
m_{1t} = - B \left[\frac{2 C r}{C'r(N-4) + 2 C(N-2)}\right] \left(\frac{4f_{1t} f'}{r^{N-2}}\alpha_{f}\right)  + \tilde{m}_1(r)  .
\end{equation}
Equations \eqref{delone} and \eqref{B1} show that it is sufficient to consider only the dynamics of $f_{1t}$. Thus, using \eqref{delone} and \eqref{B1}, \eqref{linearSkyrEOMtime} can be simplified to 
\begin{equation} \label{stahig}
\frac{\ddot{f}_{1t}}{K_0^{(1)}}\alpha = \left(K_0^{(1)}\alpha f_{1t}'\right)' + K_0^{(2)}\alpha f_{1t}' - \left(U_0^{(1)} + U_0^{(2)}\right)f_{1t}  ,
\end{equation}
with
\begin{align}
K_0^{(1)} &= e^{\delta} B  , \nonumber \\
K_0^{(2)} &= (N-4)\frac{e^{\delta}BC'}{2C}   , \nonumber \\
U_0^{(1)} &= \left(\frac{8e^{\delta}B\alpha^2{f'}^2C}{\left(C'r(N-4) + 2C(N-2)\right)r^{N-3}}\right)' - \left(\frac{e^{\delta}Bf'\sin2f}{a^2}r^{N-4}\right)'    \nonumber \\
&\quad - \frac{8e^{\delta}B \alpha \alpha_f {f'}^3C}{(C'r(N-4) + 2C(N-2))r^{N-3}}  + \frac{e^{\delta}B{f'}^2\cos2f}{a^2}r^{N-4}  \nonumber \\
&\quad + \frac{e^{\delta} \sin^2 2f}{2a^2}r^{N-6} + \frac{e^{\delta}\cos2f}{r^2}\beta + \frac{e^{\delta}\sin^2f\cos2f}{2a^2}r^{N-6}   ,  \nonumber \\
U_0^{(2)} &= (N-4) \left(\frac{4e^{\delta}B\alpha^2{f'}^2C'}{(C'r(N-4) + 2C(N-2))r^{N-3}} - \frac{e^{\delta}Bf'C'\sin2f}{2Ca^2}r^{N-4}\right)  .  \nonumber
\end{align}
One can further simplify \eqref{stahig} by taking 
\begin{equation}
 f_{1t}(r,t) = \frac{\psi(r)e^{i\omega t}}{\sqrt{\alpha}}  ,  \label{subtitusipertskalar}
 \end{equation}
such that \eqref{stahig}  can be cast into
\begin{align} 
\notag -\frac{\omega^2\psi}{K_0^{(1)}} &= \left(K_0^{(1)}\psi'\right)'\sqrt{\alpha} + K_0^{(2)}\psi' \\
\label{LinSta}
 &\quad - \left[\frac{K_0^{(1)'}\alpha'}{2\alpha} + \frac{1}{2\sqrt{\alpha}}\left(\frac{\alpha'}{\sqrt{\alpha}}\right)'K_0^{(1)} + \frac{K_0^{(2)}\alpha'}{2\alpha} + \frac{U_0^{(1)} + U_0^{(2)}}{\alpha}\right]\psi  . 
\end{align}
Defining a new coordinate $r*$
\begin{equation} \label{EqFFin}
\frac{dr*}{dr} = \frac{1}{K_0^{(1)}}  ,
\end{equation}
Eq.\ \eqref{LinSta} could be further simplified into
\begin{equation} \label{Stourm}
\frac{d^2\psi}{dr*^2} + K_0^{(2)}\frac{d\psi}{dr*} + (\omega^2 -  \hat{U}_0 ) \psi = 0 ,
\end{equation}
where
\begin{equation}
\hat{U}_0 = K_0^{(1)}\left[\frac{K_0^{(1)'}\alpha'}{2\alpha} + \frac{1}{2\sqrt{\alpha}}\left(\frac{\alpha'}{\sqrt{\alpha}}\right)'K_0^{(1)} + \frac{K_0^{(2)}\alpha'}{2\alpha} + \frac{U_0^{(1)} + U_0^{(2)}}{\alpha}\right]  ,
\end{equation}
which is a Sturm-Liouville equation.

\address{Indonesian Center for Theoretical and Mathematical Physics (ICTMP)\\
and Theoretical Physics Laboratory\\
Theoretical High Energy Physics Research Group\\
Faculty of Mathematics and Natural Sciences\\
Institut Teknologi Bandung\\
Jl. Ganesha no. 10 Bandung, Indonesia, 40132\\
\email{bobby@fi.itb.ac.id}}

\clearpage
\address{Theoretical Physics Laboratory\\
Theoretical High Energy Physics Research Group\\
Faculty of Mathematics and Natural Sciences\\
Institut Teknologi Bandung\\
Jl. Ganesha no. 10 Bandung, Indonesia, 40132\\
\email{ftakbar@fi.itb.ac.id}}

\address{Theoretical Physics Laboratory\\
Theoretical High Energy Physics Research Group\\
Faculty of Mathematics and Natural Sciences\\
Institut Teknologi Bandung\\
Jl. Ganesha no. 10 Bandung, Indonesia, 40132\\
\email{rizqi.fadli@students.itb.ac.id}}

\address{Theoretical Physics Laboratory\\
Theoretical High Energy Physics Research Group\\
Faculty of Mathematics and Natural Sciences\\
Institut Teknologi Bandung\\
Jl. Ganesha no. 10 Bandung, Indonesia, 40132\\
\email{dedenmidzanulakbar@ymail.com}}

\address{Theoretical Physics Laboratory\\
Theoretical High Energy Physics Research Group\\
Faculty of Mathematics and Natural Sciences\\
Institut Teknologi Bandung\\
Jl. Ganesha no. 10 Bandung, Indonesia, 40132}

\address{Department of Mathematical Sciences, University of Essex\\
Colchester, CO4 3SQ, UK\\
\email{hsusanto@essex.ac.uk}}

\end{document}